
\documentclass[11pt]{article}
\usepackage{amsmath, amsthm, amssymb,color,fullpage,url,booktabs}  
\urlstyle{sf}
\usepackage[colorlinks,pdftex]{hyperref} 

\newcommand{\nc}{\newcommand}
\nc{\rnc}{\renewcommand}

\newcommand{\bra}[1]{\left\langle #1\right|}
\newcommand{\ket}[1]{\left|#1\right\rangle}
\newcommand{\proj}[1]{\left|#1\right\rangle\left\langle #1\right|}
\newcommand{\braket}[2]{\left\langle #1\middle|#2\right\rangle}

\DeclareMathOperator{\Img}{Im}

\DeclareMathOperator{\tr}{tr}

\DeclareMathOperator{\Span}{span}
\DeclareMathOperator{\supp}{supp}
\DeclareMathOperator{\swap}{SWAP}

\def\be#1\ee{\begin{equation}#1\end{equation}}
\def\bea#1\eea{\begin{eqnarray}#1\end{eqnarray}}
\def\beas#1\eeas{\begin{eqnarray*}#1\end{eqnarray*}}
\def\ba#1\ea{\begin{align}#1\end{align}}
\def\bas#1\eas{\begin{align*}#1\end{align*}}
\def\bpm#1\epm{\begin{pmatrix}#1\end{pmatrix}}
\def\non{\nonumber}

\def\eq#1{(\ref{eq:#1})}

\def\L{\left} 
\def\R{\right}
\def\ra{\rightarrow}
\def\ot{\otimes}

\newtheorem{thm}{Theorem}
\newtheorem*{thm*}{Theorem}

\newtheorem{cor}[thm]{Corollary}
\newtheorem{lem}[thm]{Lemma}
\newtheorem{prop}[thm]{Proposition}
\newtheorem{dfn}{Definition}
\newtheorem{proto}{Protocol}

\makeatletter
\newtheorem*{rep@theorem}{\rep@title}
\newcommand{\newreptheorem}[2]{%
\newenvironment{rep#1}[1]{%
 \def\rep@title{#2 \ref{##1} (restatement)}%
 \begin{rep@theorem}}%
 {\end{rep@theorem}}}
\makeatother

\newreptheorem{thm}{Theorem}
\newreptheorem{lem}{Lemma}

\def\eps{\epsilon}

\def\cI{{\cal I}}

\def\cM{{\cal M}}
\def\cN{\mathcal{N}}

\def\cS{\mathcal{S}}

\def\cU{\mathcal{U}}

\def\bbC{\mathbb{C}}

\DeclareMathOperator*{\bbE}{\mathbb{E}}

\def\bbN{\mathbb{N}}
\def\bbR{\mathbb{R}}
\def\bbZ{\mathbb{Z}}

\def\benum{\begin{enumerate}}
\def\eenum{\end{enumerate}}
\def\bit{\begin{itemize}}
\def\eit{\end{itemize}}

\newcommand{\secref}[1]{Section~\ref{sec:#1}}

\newcommand{\lemref}[1]{Lemma~\ref{lem:#1}}
\newcommand{\thmref}[1]{Theorem~\ref{thm:#1}}
\newcommand{\propref}[1]{Proposition~\ref{prop:#1}}

\newcommand{\corref}[1]{Corollary~\ref{cor:#1}}

\usepackage{cancel}

\nc{\ssym}[2]{\vee^{#2}\bbC^{#1}}
\nc{\psym}[2]{P_{\text{sym}}^{#1,#2}}
\nc{\dimsym}[2]{#1[#2]}
\nc{\dsym}[1]{\dimsym{d}{#1}}
\DeclareMathOperator{\clone}{Clone}
\DeclareMathOperator{\MP}{MP}

\begin{document}

\title{The Church of the Symmetric Subspace}
\author{Aram W. Harrow\thanks{Center for Theoretical Physics,
    MIT.  email: {\tt aram@mit.edu}}}
\date{\today}
\maketitle
\begin{abstract}
The symmetric subpace has many applications in quantum information
theory.  This review article begins by explaining key  background
facts about the symmetric subspace from a quantum information
perspective.  Then we review, and in some places extend, work of
Werner and Chiribella that connects the symmetric subspace to 
state estimation, optimal cloning, the de Finetti theorem and other
topics.  In the third and final section, we discuss how the symmetric
subspace can yield concentration-of-measure results via the
calculation of higher moments of random quantum states.

There are no new results in this article, but only some new proofs of
existing results, such as a variant of the exponential de Finetti theorem.  The
purpose of the article is (a) pedagogical, and (b) to collect in one
place many, if not all, of the quantum information applications of the
symmetric subspace. 
\end{abstract}

\tableofcontents
\vspace{5mm}

Schur-Weyl duality between the unitary and symmetric groups is a
powerful and useful tool in quantum 
information.  But some aspects of it are unsatisfactory.  The proofs
are rarely fully self-contained, and require excursions into other Lie
algebras.  At the same time, they involve irreducible representations
(irreps) that lack simple, explicit, constructions, making the theory
less useful for calculations than one would like.  But in many cases,
the symmetric subspace is the only necessary piece that needs to be
understood.  The symmetric subspace is the simplest component
of Schur-Weyl duality (with the antisymmetric subspace a close second)
and often can be used effectively without the need to ever explicitly
invoke representation theory.

In \secref{sym} of these notes, I will give a self-contained review of the properties
of the symmetric subspace.  Some applications of the symmetric
subspace involve cloning, state estimation and the de Finetti
theorem.  These are discussed in a unified way by \cite{Chiri10}, and
in \secref{chiri}, I will give a brief review of that paper.  Another
reason to study the symmetric subspace is that it is a way of looking
at higher moments of quantum states.  In \secref{conc}, I'll explain
how this can be used to give alternate and unified derivations of many
concentration-of-measure results in quantum information theory.  This
is the only part of the paper to mostly consist of original work, although
even here this consists mostly of new proofs of previously known theorems.

\begin{table}
\begin{center}
\begin{tabular}{|l|l|}
\hline
Variable & Definition \\ \hline
$d$ &  local dimension of each subsystem\\
$[d]$ & the set $\{1,\ldots,d\}$\\
$\ssym{d}{n}$ & the symmetric subspace of $(\bbC^d)^{\ot n}$ \\
$\psym{d}{n}$ & the orthogonal projector onto $\ssym{d}{n}$\\
$\dsym{n}$ & $\binom{d+n-1}{n} = \dim\ssym{d}{n} = \tr\psym{d}{n}$\\
$P_d(\pi)$ & $\sum_{i_1,\ldots,i_n\in [d]} \ket{i_{\pi^{-1}(1)}, \ldots, i_{\pi^{-1}(n)}}\bra{i_1,\ldots,i_n}$\\
$\cS_n$ & the symmetric group on $n$ objects \\
$\cU_d$ & the group of $d\times d$ unitary matrices \\
$\bbZ_+$ & nonnegative integers \\
$\cI_{d,n}$ & $\{(t_1,\ldots,t_d) : t_1,\ldots,t_d\in \bbZ_+, t_1+\ldots+t_d = n\}$\\
$\binom{n}{\vec t}$ & $\frac{n!}{t_1!\ldots t_d!}$ \\
L(V) & linear operators on a vector space $V$ \\
H(V) & Hermitian operators on $V$ \\
$\varphi$ & $\proj \varphi$ (convention used for all pure states)\\
\hline\end{tabular}
\caption{Here is a table of notation, used throughout the notes.  For
  now, you should skip it and go straight to \secref{sym}.}
\end{center}
\end{table}

\section{The symmetric subspace}\label{sec:sym}

One motivation for writing these notes is that there is no
comprehensive treatment of the symmetric subspace from the quantum
information viewpoint.  Ref.~\cite{matthias} covers some of it,
Ref.~\cite{Har05} a little less, and Refs.~\cite{GW98,Stanley-EC2} are
excellent, but approach the subject respectively from the
Lie-algebraic or combinatorial perspective, rather than in terms of
quantum information.   All of these are
really more focused on Schur-Weyl duality in general than the
symmetric subspace specifically.  Some exceptions are
Ref.~\cite{BBDEJM96,JHHH98}, which are good, but present only as much of the
theory as they need for for their applications.   Koenraad Audenaert
has also written some nice notes on the representation theory of the
symmetric group~\cite{Audenaert06}.

Let $\cS_n$ be the symmetric group on $n$ letters.  For $\pi\in\cS_n$, define 
$$P_d(\pi) = \sum_{i_1,\ldots,i_n\in [d]} \ket{i_{\pi^{-1}(1)}, \ldots, i_{\pi^{-1}(n)}}\bra{i_1,\ldots,i_n}.$$
Note that $P_d(\pi_1\pi_2) = P_d(\pi_1)P_d(\pi_2)$.  In other words, $P_d$ is a representation of $\cS_n$ on $(\bbC^d)^{\ot n}$.  

The {\em symmetric subspace} of $(\bbC^d)^{\ot n}$ is denoted $\ssym d n$ and is defined to be 
\be \ssym d n = \{\ket{\psi}\in(\bbC^d)^{\ot n} : P_d(\pi)\ket\psi = \ket \psi \,\forall\ket\psi\in\cS_n\}. \ee
(The $\vee$ denotes the symmetric product, by contrast with $\wedge$ which stands for the antisymmetric product, and which we will not discuss here.)

Define 
\be \psym d n = \frac{1}{n!} \sum_{\pi\in\cS_n}P_d(\pi).\label{eq:psym-group-avg}\ee
\begin{prop}\label{prop:psym-proj}
$\psym d n$ is the orthogonal projector onto $\ssym d n$.  
\end{prop}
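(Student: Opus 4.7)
The plan is to verify directly the three characterizing properties of an orthogonal projector onto $\ssym{d}{n}$: Hermiticity, idempotence, and the identification of its image with $\ssym{d}{n}$.

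First I would check Hermiticity. From the explicit formula for $P_d(\pi)$ one sees that $P_d(\pi)^\dagger = P_d(\pi^{-1})$, since taking the adjoint swaps the roles of the ket and bra indices. Because the map $\pi\mapsto \pi^{-1}$ is a bijection on $\cS_n$, averaging gives $\psym{d}{n}^\dagger = \frac{1}{n!}\sum_\pi P_d(\pi^{-1}) = \psym{d}{n}$.

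Next I would establish the key invariance identity $P_d(\tau)\,\psym{d}{n} = \psym{d}{n} = \psym{d}{n}\,P_d(\tau)$ for every $\tau\in\cS_n$. Using the homomorphism property $P_d(\tau)P_d(\pi)=P_d(\tau\pi)$ and the substitution $\sigma = \tau\pi$ (a bijection on $\cS_n$ for each fixed $\tau$), the left multiplication gives $\frac{1}{n!}\sum_\pi P_d(\tau\pi) = \frac{1}{n!}\sum_\sigma P_d(\sigma) = \psym{d}{n}$, and similarly on the right. Averaging this identity over $\tau$ immediately yields $\psym{d}{n}^2 = \psym{d}{n}$, so $\psym{d}{n}$ is an orthogonal projector.

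Finally I would show $\Img(\psym{d}{n}) = \ssym{d}{n}$. For the inclusion $\ssym{d}{n}\subseteq \Img(\psym{d}{n})$: if $P_d(\pi)\ket\psi=\ket\psi$ for all $\pi$, then term-by-term averaging gives $\psym{d}{n}\ket\psi=\ket\psi$, so $\ket\psi$ lies in the image. Conversely, if $\ket\psi = \psym{d}{n}\ket\phi$ for some $\ket\phi$, then applying the invariance identity from the previous step gives $P_d(\tau)\ket\psi = P_d(\tau)\psym{d}{n}\ket\phi = \psym{d}{n}\ket\phi = \ket\psi$ for every $\tau\in\cS_n$, hence $\ket\psi\in\ssym{d}{n}$.

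No step is really an obstacle here; the only mild subtlety is being careful that the relation $P_d(\pi_1\pi_2)=P_d(\pi_1)P_d(\pi_2)$ is used consistently (it has already been noted in the excerpt), since this is what makes the two substitution-of-summation-variable arguments (for idempotence and for invariance) go through cleanly.
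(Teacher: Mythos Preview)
Your proof is correct and follows essentially the same route as the paper: both rely on the invariance identity $P_d(\tau)\psym{d}{n}=\psym{d}{n}$ (established via the bijection $\pi\mapsto\tau\pi$), and both verify the two inclusions for the image in exactly the way you describe. The only cosmetic difference is that you check Hermiticity and idempotence separately, whereas the paper combines them into the single condition $\Pi^\dagger\Pi=\Pi$, which it verifies directly from the invariance identity.
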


\begin{proof}
Since group multiplication is invertible, we have that for any $\pi\in\cS_n$
\be\begin{split} P_d(\pi)\psym d n &= P_d(\pi)\frac{1}{n!} \sum_{\pi'\in\cS_n} P_d(\pi')
\\&= \frac{1}{n!} \sum_{\pi'\in\cS_n} P_d(\pi\pi')
\\&= \frac{1}{n!} \sum_{(\pi^{-1}\pi')\in\cS_n} P_d(\pi')
\\&= \frac{1}{n!} \sum_{\pi'\in\cS_n} P_d(\pi')
= \psym d n.
\end{split}\label{eq:psym-eats-pi}\ee
Similarly, we have $\psym d n P_d(\pi) = \psym d n$.

This implies that 
$$(\psym d n)^\dag \psym d n = \frac{1}{n!} \sum_{\pi\in\cS_n} P_d(\pi^{-1}) \psym d n = 
\frac{1}{n!} \sum_{\pi\in\cS_n}\psym d n = \psym d n.$$
Therefore $\psym d n $ is an orthogonal projector, since $\Pi^\dag
\Pi=\Pi$ is a necessary and sufficient condition for an operator $\Pi$
to be an orthogonal projector.

We also use \eq{psym-eats-pi} to show that for any $\ket\psi\in(\bbC^d)^{\ot n}$, 
$$P_d(\pi)\psym d n \ket\psi = \psym d n\ket\psi.$$
Thus $\psym d n\ket\psi \in \ssym d n $, and we have that $ \Img\psym d n\subseteq  \ssym d n$.

To show  that $\ssym d n \subseteq \Img\psym d n$, we observe that if $\ket\psi \in \ssym d n$ then $\psym d n \ket \psi = \frac{1}{n!} \sum_{\pi\in\cS_n}P_d(\pi)\ket \psi = \ket \psi$.
\end{proof}

The alert reader will notice that almost no properties of $\cS_n$ were
used in the above proof.  We can generalize 
\propref{psym-proj} to a large class of groups.  The necessary condition is that a
group $G$ should have an invariant measure $\mu$.  That is, for any
integrable function $f:G\ra \bbC$, and any $g\in G$, we have
$\int_{x\in G} \mu(x)f(x) {\rm d}x = \int_{x\in G} \mu(x)f(gx) {\rm
  d}x$.  Such measures exist for all finite groups (take $\mu(x) = 1/|G|$ and replace the integral by a sum)
and for all compact Lie groups, such as the unitary group. In the
latter case, there is a unique measure (up to normalization) called
the Haar measure.

  For a vector space $V$, define $L(V)$ to be the set of linear operators on $V$. 
\begin{prop} \label{prop:invar-proj}
Let $G$ be a group with an invariant measure $\mu$, and a
  representation $R : G \ra L(V)$.  Define 
\ba V^G &:= \{\ket\psi\in V :   R(g)\ket\psi = \ket \psi \forall g\in
G\} \\
\Pi & := \int_{x\in G} {\rm d}x \mu(x) R(x)\ea
Then $\Pi$ is an orthogonal projector onto $V^G$.
\end{prop}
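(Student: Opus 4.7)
The plan is to mimic the proof of \propref{psym-proj} step by step, replacing the average $\frac{1}{n!}\sum_{\pi\in\cS_n}$ with the invariant integral $\int_G \mu(x)\,{\rm d}x$, and replacing the rearrangement of the sum (which used invertibility of group multiplication in $\cS_n$) by a direct appeal to invariance of $\mu$. I will tacitly work in the setting where $V$ carries an inner product under which $R$ is unitary; this is harmless for the cases of interest and in fact can always be arranged by averaging an arbitrary inner product on $V$ using $\mu$.

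First I would establish the analogue of \eq{psym-eats-pi}, namely $R(g)\Pi = \Pi$ for every $g\in G$. Pulling $R(g)$ inside the integral and using $R(g)R(x)=R(gx)$ gives
\[R(g)\Pi = \int_{x\in G} \mu(x) R(gx)\,{\rm d}x,\]
which equals $\int_{x\in G} \mu(x) R(x)\,{\rm d}x = \Pi$ by applying the invariance property of $\mu$ componentwise to the integrand $f(x)=R(x)$.

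Next I would verify that $\Pi$ is an orthogonal projector by checking $\Pi^\dagger\Pi = \Pi$. Using unitarity, $\Pi^\dagger = \int_{x\in G}\mu(x) R(x^{-1})\,{\rm d}x$, and combining with the identity $R(x^{-1})\Pi = \Pi$ just established gives
\[\Pi^\dagger\Pi = \int_{x\in G} \mu(x) R(x^{-1})\Pi\,{\rm d}x = \int_{x\in G}\mu(x)\Pi\,{\rm d}x = \Pi,\]
where the last step uses the normalization $\int_G \mu(x)\,{\rm d}x = 1$. Finally, the image of $\Pi$ coincides with $V^G$ by the two usual inclusions: the identity $R(g)\Pi\ket\psi = \Pi\ket\psi$ gives $\Img\Pi \subseteq V^G$, while any $\ket\psi\in V^G$ satisfies $\Pi\ket\psi = \int_{x\in G}\mu(x)\ket\psi\,{\rm d}x = \ket\psi$, giving $V^G\subseteq\Img\Pi$.

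The only real obstacle is a technical one: pushing $R(g)$ through the integral requires a Fubini-type argument componentwise, and the whole statement requires an inner product under which $R$ is unitary. Both points are trivial for the finite groups and compact Lie groups motivating the proposition, so I do not expect any substantive difficulty beyond flagging these mild hypotheses.
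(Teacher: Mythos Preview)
Your proposal is correct and follows exactly the approach the paper intends: the paper omits the proof, stating only that it ``follows the same lines as that of \propref{psym-proj},'' and your argument does precisely this, replacing the finite sum by the invariant integral and appealing to invariance of $\mu$ in place of the rearrangement-of-sum step. Your flagged caveats about unitarity of $R$ and interchange of $R(g)$ with the integral are appropriate and do not affect the cases of interest.
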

We omit the proof, as it follows the same lines as that of \propref{psym-proj}.

We now return to our discussion of the symmetric subspace, and give
two equivalent characterizations of $\ssym d n$. First define 
$$A =\Span \{\ket{\varphi}^{\ot n} : \ket{\varphi}\in\bbC^d\}.$$

Second, let $\bbZ_+$ denote nonnegative integers. Let $\cI_{d,n} = \{(t_1,\ldots,t_d) : t_1,\ldots,t_d\in \bbZ_+, t_1+\ldots+t_d = n\}$.  For $\vec t\in \cI_{d,n}$ we abbreviate the multinomial coefficient $\frac{n!}{t_1!\ldots t_d!}$ by $\binom{n}{\vec t}$.  For $\vec i = (i_1,\ldots,i_n)\in [d]^n$, the {\em type} of $\vec i$ is denoted $T(\vec i)$ and defined to be the vector in $\cI_{d,n}$ whose $j^{\text{th}}$ entry is the number of times that $j$ appears in the string $(i_1,\ldots,i_n)$.  Note that $|T^{-1}(\vec t)| = \binom{n}{\vec t}$.
Now define
$$\ket{s_{\vec t}} := \sqrt{\binom{n}{\vec t}} \sum_{\vec i: T(\vec i) =\vec t} \ket{i_1,\ldots,i_n}.$$
Finally we can define the subspace
$$ B = \Span \{\ket{s_{\vec{t}}} : \vec t \in \cI_{d,n}\}.$$

We can now state our main theorem about the structure of $\ssym d n$.
\begin{thm}\label{thm:sym}
$$\ssym d n = A = B.$$
\end{thm}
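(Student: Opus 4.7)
The plan is to close a loop of four inclusions: $A \subseteq \ssym d n$, $B \subseteq \ssym d n$, $\ssym d n \subseteq B$, and $B \subseteq A$. Three of these are short; the fourth carries the real content.

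First, I would dispatch the easy inclusions. For $A \subseteq \ssym d n$, the state $\ket \varphi^{\otimes n}$ is manifestly fixed by each $P_d(\pi)$. For $B \subseteq \ssym d n$, the vector $\ket{s_{\vec t}}$ is a scalar multiple of the sum of all computational basis states of type $\vec t$, and $P_d(\pi)$ permutes computational basis states within a fixed type, so it fixes $\ket{s_{\vec t}}$. For $\ssym d n \subseteq B$, I would expand a general symmetric state as $\ket \psi = \sum_{\vec i \in [d]^n} \alpha_{\vec i} \ket{\vec i}$; the condition $P_d(\pi)\ket \psi = \ket \psi$ forces $\alpha_{\vec i} = \alpha_{\pi\cdot \vec i}$ for every $\pi$, so $\alpha_{\vec i}$ depends only on $T(\vec i)$, and grouping by type rewrites $\ket \psi$ as a linear combination of the $\ket{s_{\vec t}}$.

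The main work is $B \subseteq A$, i.e.\ exhibiting each type vector as a linear combination of $n$-fold tensor powers. My approach is polarization, exploiting the polynomial structure of $\ket \varphi^{\otimes n}$ as a function of $\varphi = (\varphi_1,\ldots,\varphi_d) \in \bbC^d$. Multiplying out gives
\[
\ket \varphi^{\otimes n} = \sum_{\vec t \in \cI_{d,n}} \varphi_1^{t_1}\cdots \varphi_d^{t_d} \sum_{\vec i:\, T(\vec i)=\vec t} \ket{\vec i},
\]
a homogeneous polynomial of degree $n$ in $\varphi$ whose monomial coefficients are (up to scalars that depend only on $\vec t$) exactly the vectors $\ket{s_{\vec t}}$. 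To extract those coefficients, I would either apply the mixed partial derivative $\partial^n/\partial \varphi_1^{t_1}\cdots \partial \varphi_d^{t_d}$ at $\varphi = 0$ and rewrite it as a finite difference of the values $\ket \varphi^{\otimes n}$ at finitely many points, or, more concretely, choose $|\cI_{d,n}|$ sample vectors $\varphi^{(k)}$ such that the matrix $M_{k,\vec t} = (\varphi^{(k)}_1)^{t_1}\cdots(\varphi^{(k)}_d)^{t_d}$ is invertible, and solve the resulting linear system for each $\ket{s_{\vec t}}$ in terms of the $\ket{\varphi^{(k)}}^{\otimes n}$.

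The main obstacle is justifying the existence of such a nondegenerate sample set, but this is the standard fact that distinct monomials are linearly independent polynomials, so the joint evaluation map from the $|\cI_{d,n}|$-dimensional span of these monomials to $\bbC^{|\cI_{d,n}|}$ is injective for a generic (indeed Zariski-open) choice of points; no quantum input is required. This completes $B \subseteq A$ and so closes the loop $A \subseteq \ssym d n = B \subseteq A$, yielding $\ssym d n = A = B$.
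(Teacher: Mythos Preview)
Your proof is correct and follows essentially the same route as the paper: the same loop of inclusions, with the substantive step $B\subseteq A$ handled by reading off the $\ket{s_{\vec t}}$ as monomial coefficients of the polynomial $\ket\varphi^{\otimes n}=(\sum_i \varphi_i\ket i)^{\otimes n}$. The only cosmetic differences are that the paper proves $\ssym d n\subseteq B$ by computing $\psym d n\ket{\vec i}$ rather than by expanding in the computational basis, and extracts the polynomial coefficients via derivatives (using that subspaces of a finite-dimensional space are closed) rather than via your evaluation-matrix argument.
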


\begin{proof}
Since $A$ and $B$ are both spanned by sets of vectors that are individually invariant under $P_d(\pi)$, it follows that $A\subseteq \ssym d n$ and $B\subseteq \ssym d n$.

To show that $\ssym d n\subseteq B$, we note that $\psym d n \ket{i_1,\ldots, i_n} = \binom{n}{T(\vec i)}^{-1/2} \ket{s_{T(\vec i)}}$ and therefore $\Img \psym d n \subseteq B$. Since $\ssym d n = \Img \psym d n$, we conclude that $\ssym d n = B$.

The last step is to show that $B\subseteq A$.  Here we use polynomials
in a clever way.  Suppose  $p(x) = v_0 + x v_1 + \ldots + x^d v_d$,
for $v_0,\ldots, v_d$ vectors in a finite-dimensional space $V$.  For
some subspace $W\subset V$, suppose that $p(x)\in W$ for all $x$.
Then I claim that $v_0,\ldots,v_d\in W$.  The proof is that derivatives of $p(x)$ can be expressed as limits of linear combinations of $p(x)$ for different values of $x$, and therefore are all contained in $W$.   Then we use the fact that $v_k =\frac{1}{k!} \frac{\partial^k}{\partial x^k}p(x)|_{x=0}$.

By induction on the number of variables, we can extend this to argue
that if $p(x_1,\ldots,x_d)\in W$ for all $x_1,\ldots,x_d$ then the
coefficient of $x_1^{t_1}\ldots x_d^{t_d}$ must be in $W$ for each
$t_1,\ldots,t_d\in \bbZ_+^d$.

Now, consider the polynomial $\ket{p(x_1,\ldots,x_d)}:=(\sum_{i=1}^d
x_i \ket{i})^{\ot n}$.  Since $\ket{p(x_1,\ldots,x_d)}$ is a tensor
power state, it belongs to $A$.  The coefficient of $x_1^{t_1}\ldots
x_d^{t_d}$ in $\ket{p(x_1,\ldots,x_d)}$ is proportional to
$\ket{s_{\vec t}}$.  Therefore $\ket{s_{\vec t}}\in A$ for all $\vec
t\in\cI_{d,n}$. We conclude that $B\subseteq A$.
\end{proof}

One interesting consequence of \thmref{sym} is obtained by replacing
$\bbC^d$ with $M_d$, the vector space of $d\times d$ matrices.  The
symmetric group acts on $M_d^{\ot n}$ by conjugation, with $\pi$ sending $M\in
M_d^{\ot n}$ to $P_d(\pi) M P_d(\pi)^\dag$. 
\begin{cor}\label{cor:commutant}
If $M\in M_d^{\ot n}$, then $[M, P_d(\pi)]=0$ for all $\pi\in \cS_n$
if and only if $M\in\Span\{X^{\ot n}: X\in M_d\}$.
\end{cor}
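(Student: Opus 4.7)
The plan is to reduce the corollary to Theorem~\ref{thm:sym} applied to the vector space $M_d$ (of dimension $d^2$) in place of $\bbC^d$. The key identification is between the conjugation action of $\cS_n$ on $M_d^{\ot n}$ and the standard permutation action on $M_d^{\ot n}$ viewed as the $n$-fold tensor product of the vector space $M_d$.

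First I would verify this identification on elementary tensors: for $A_1,\ldots,A_n\in M_d$, a direct computation of $P_d(\pi)(A_1\ot\cdots\ot A_n)P_d(\pi)^\dag$ acting on a basis vector $\ket{i_1,\ldots,i_n}$ shows that conjugation by $P_d(\pi)$ simply permutes the $n$ matrix factors, so that this conjugation action agrees with the representation of $\cS_n$ on $(M_d)^{\ot n}$ that permutes tensor factors (if one thinks of $M_d$ as an abstract $d^2$-dimensional vector space, this is literally the analogue of $P_{d^2}(\pi)$).

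With this in hand, the condition $[M,P_d(\pi)]=0$ for all $\pi\in\cS_n$ is equivalent to $M$ being fixed by the permutation action, i.e.\ $M\in \vee^n M_d$. Now I invoke Theorem~\ref{thm:sym} with $\bbC^d$ replaced by $M_d$: the symmetric subspace $\vee^n M_d$ equals the set $A=\Span\{X^{\ot n}:X\in M_d\}$. Combining these two equivalences gives the corollary.

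The only real obstacle is the identification in the first step, and it is essentially bookkeeping. A minor subtlety is that $\pi$ versus $\pi^{-1}$ appears when one works out conjugation carefully, but this is harmless because the set of invariants under $\{P_d(\pi):\pi\in\cS_n\}$ coincides with the set of invariants under $\{P_d(\pi^{-1}):\pi\in\cS_n\}$. Once the identification is made, everything else is a direct application of Theorem~\ref{thm:sym}.
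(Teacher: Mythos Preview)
Your proposal is correct and follows exactly the paper's approach: the paper obtains the corollary by noting that the conjugation action of $\cS_n$ on $M_d^{\ot n}$ is the permutation action on $(M_d)^{\ot n}$ and then invoking \thmref{sym} with $\bbC^d$ replaced by the $d^2$-dimensional space $M_d$. Your write-up simply spells out the identification (and the harmless $\pi$ vs.\ $\pi^{-1}$ point) in more detail than the paper does.
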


\corref{commutant} is like a baby de Finetti theorem (of which we will
discuss more in \secref{chiri}).  It says that a
permutation-invariant state $\rho$ can be written as $\sum a_i
X_i^{\ot n}$.  Unfortunately, $a_i$ and $X_i$ do not have to be
positive, making this decomposition less useful.  

Another natural way to understand $\ssym d n$ is in terms of representation theory.
\begin{thm}\label{thm:sym-irrep}
$\ssym d n$ is an irreducible represention of $\cU_d$ under the action
$U\mapsto U^{\ot n}$.
\end{thm}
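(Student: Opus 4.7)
The plan is to exploit the maximal torus inside $\cU_d$ to decompose any invariant subspace into weight spaces, and then use the rest of the Lie algebra to connect those weight spaces. First, $\ssym d n$ really is a $\cU_d$-subrepresentation of $(\bbC^d)^{\otimes n}$: $U^{\otimes n}$ commutes with every $P_d(\pi)$ and therefore preserves $\ssym d n$. So only irreducibility requires work.

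Let $W \subseteq \ssym d n$ be a nonzero $\cU_d$-invariant subspace, and consider the diagonal torus $T^d = \{\operatorname{diag}(e^{i\theta_1},\ldots,e^{i\theta_d}) : \vec\theta\in\bbR^d\}\subset \cU_d$. Each type basis vector satisfies $\operatorname{diag}(e^{i\theta_1},\ldots,e^{i\theta_d})^{\otimes n}\ket{s_{\vec t}} = e^{i\vec\theta\cdot\vec t}\ket{s_{\vec t}}$, and since the $\vec t\in\cI_{d,n}$ are pairwise distinct, these are pairwise distinct characters of $T^d$. The standard decomposition of a finite-dimensional torus representation into characters (obtained by averaging $U^{\otimes n}$ against characters of $T^d$) then forces $W=\Span\{\ket{s_{\vec t}}: \vec t\in T\}$ for some nonempty subset $T\subseteq \cI_{d,n}$.

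To show $T=\cI_{d,n}$, I would use the infinitesimal action of the rest of $\cU_d$. For $X\in M_d$ let $\Delta(X) := \sum_{k=1}^n I^{\otimes(k-1)}\otimes X\otimes I^{\otimes(n-k)}$. For Hermitian $H$, $\Delta(iH)\ket\psi = \left.\frac{d}{dt}\right|_{t=0}(e^{itH})^{\otimes n}\ket\psi$ is the derivative of a curve in $W$, so it lies in $W$; by $\bbC$-linearity this extends to all $X\in M_d$. Applying this with $X=\ket i\bra j$ for $i\neq j$, a direct count---each string of type $\vec t$ contributes $t_j$ modifications (one per position carrying the symbol $j$), and each target string of type $\vec t - e_j + e_i$ is produced $t_i+1$ times---shows that $\Delta(\ket i\bra j)\ket{s_{\vec t}}$ is a positive multiple of $\ket{s_{\vec t - e_j + e_i}}$ whenever $t_j\geq 1$. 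Hence $T$ is closed under all moves $\vec t\mapsto \vec t - e_j + e_i$, and since the graph on $\cI_{d,n}$ with these edges is clearly connected, $T=\cI_{d,n}$ and $W=\ssym d n$.

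The main obstacle is the combinatorial coefficient in that last step: one must check that the contributions to $\Delta(\ket i\bra j)\ket{s_{\vec t}}$ do not cancel. They do not, because each is a nonnegative integer multiple of some basis vector $\ket{i_1,\ldots,i_n}$, so the coefficient is a manifestly positive integer count. Everything else (the weight decomposition via the torus, and the connectivity of the type lattice) is routine.
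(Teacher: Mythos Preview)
Your argument is correct and complete: the torus picks out the one-dimensional weight spaces $\bbC\ket{s_{\vec t}}$, invariance under differentiation along one-parameter subgroups gives the Lie-algebra action, and the raising/lowering operators $\Delta(\ket i\bra j)$ connect the weight lattice. The positivity check for the coefficient of $\ket{s_{\vec t-e_j+e_i}}$ is exactly the point where such arguments can go wrong, and you handle it correctly.

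However, this is precisely the Lie-algebraic route the paper explicitly sets out to avoid. The paper's proof instead works entirely at the group level: given arbitrary $\ket{\psi_1},\ket{\psi_2}\in\ssym d n$, it uses \thmref{sym} to pick product vectors $\ket{\varphi_i}^{\ot n}$ with nonzero overlap with each $\ket{\psi_i}$, then averages $U^{\ot n}$ over unitaries of the form $V_1^\dag W V_2$ (with $V_i$ random in the stabilizer of $\ket{\varphi_i}$ and $W$ mapping $\ket{\varphi_2}$ to $\ket{\varphi_1}$) to obtain $\bbE_U U^{\ot n}=\ket{\varphi_1}\bra{\varphi_2}^{\ot n}$, forcing $\bra{\psi_1}U^{\ot n}\ket{\psi_2}\neq 0$ for some $U$. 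Your approach is the standard representation-theoretic one and generalizes cleanly to other highest-weight modules; the paper's approach is more self-contained within the quantum-information toolkit already developed (\thmref{sym} and \propref{invar-proj}), never mentions weights or derivatives, and makes no appeal---even implicit---to the Lie algebra.
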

This fact is often proved using facts about the irreducible
representations of Lie algebras.  To keep things self-contained, I
will give an elementary proof using ideas familiar to quantum
information.
\begin{proof}
Consider an arbitrary pair of unit vectors $\ket{\psi_1},\ket{\psi_2}\in
\ssym d n$.  We will demonstrate the existence of a $U\in \cU_d$ such
that $\bra{\psi_1} U^{\ot n}\ket{\psi_2} \neq 0$.   Equivalently, we will
choose a probability distribution over $U$ satisfying 
\be \bbE_U \bra{\psi_1} U^{\ot n}\ket{\psi_2} \neq 0
\label{eq:not-irred}.\ee

To this end, choose unit vectors $\ket{\varphi_1},\ket{\varphi_2}\in
\bbC^d$ such that $\bra{\psi_i}\cdot \ket{\varphi_i}^{\ot n}\neq 0$
for $i=1,2$. By \thmref{sym}, these vectors must exist.   Then, for $i=1,2$,
choose $V_i$ uniformly at random from the set of unitaries satisfying
$V_i\ket{\varphi_i}=\ket{\varphi_i}$.  Such unitaries can be
constructed by choosing a random element of $\cU_{d-1}$ and embedding
it in the space orthogonal to $\ket{\varphi_i}$.
Since $\bbE_{V_i} V_i^{\ot n}$ is an average over a group action, \propref{invar-proj} implies that it is a
projector onto the set of vectors fixed by each $V_i^{\ot n}$; in particular, 
$$\bbE_{V_i} V_i^{\ot n} =\proj{\varphi_i}^{\ot n}.$$
Finally, choose $W\in \cU_d$ to be a unitary satisfying
$W\ket{\varphi_2}=\ket{\varphi_1}$, and let $U = V_1^\dag W V_2$.
Then $\bbE_U U^{\ot n}= \ket{\varphi_1}\bra{\varphi_2}^{\ot n}$, and
since we have assumed $\ket{\psi_i}$ has nonzero overlap with
$\ket{\varphi_i}^{\ot n}$ for $i=1,2$, we obtain \eq{not-irred}.
\end{proof}

Using \thmref{sym-irrep} and Schur's Lemma gives us another
characterization of $\psym d n$.  
\begin{prop}\label{prop:twirl}
\be\bbE_\varphi \varphi^{\ot n} = \frac{\psym d n}{\tr \psym d n} =
  \frac{\psym d n}{\dsym n} =
  \frac{\sum_{\pi\in\cS_n}P_d(\pi)}{d(d+1)\cdots (d+n-1)}.
\label{eq:psym-twirl}\ee
\end{prop}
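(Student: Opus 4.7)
The plan is to identify $\bbE_\varphi \varphi^{\ot n}$ as a unitarily invariant operator on $(\bbC^d)^{\ot n}$ whose image lies in $\ssym d n$, and then apply Schur's Lemma together with \thmref{sym-irrep} to conclude that it is proportional to $\psym d n$. The normalization will come from tracing both sides, and the final equality will be a matter of unpacking the definitions of $\psym d n$ and $\dsym n$.

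First I would fix a reference state $\ket{\varphi_0}\in\bbC^d$ and write $\ket\varphi = U\ket{\varphi_0}$ for $U$ drawn from the Haar measure on $\cU_d$, so that
\be
\bbE_\varphi \varphi^{\ot n} = \bbE_U (U\proj{\varphi_0}U^\dag)^{\ot n} = \bbE_U U^{\ot n} \proj{\varphi_0}^{\ot n} (U^\dag)^{\ot n}.
\ee
By translation invariance of the Haar measure, this operator commutes with $V^{\ot n}$ for every $V\in\cU_d$. Moreover, since each $\ket\varphi^{\ot n}\in \ssym d n$ by \thmref{sym}, the image of $\bbE_\varphi \varphi^{\ot n}$ is contained in $\ssym d n$, so $\bbE_\varphi \varphi^{\ot n} = \psym d n \,\bbE_\varphi \varphi^{\ot n}\, \psym d n$.

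Next I would invoke \thmref{sym-irrep}: the action $V\mapsto V^{\ot n}$ on $\ssym d n$ is irreducible. Therefore the restriction of $\bbE_\varphi \varphi^{\ot n}$ to $\ssym d n$ is a $\cU_d$-intertwiner of an irrep with itself, and Schur's Lemma over $\bbC$ forces it to be a scalar multiple of the identity on $\ssym d n$. Combining with the previous paragraph, $\bbE_\varphi \varphi^{\ot n} = c\, \psym d n$ for some $c\in\bbC$. Taking the trace and using $\tr \varphi^{\ot n}=1$ gives $1 = c\cdot \tr \psym d n = c\cdot \dsym n$, hence $c = 1/\dsym n$.

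Finally I would deduce the last equality by plugging in the definitions: $\psym d n = \frac{1}{n!}\sum_{\pi\in\cS_n} P_d(\pi)$ and $\dsym n = \binom{d+n-1}{n} = \frac{d(d+1)\cdots(d+n-1)}{n!}$, so the $n!$'s cancel. The only step that requires genuine input beyond bookkeeping is the application of Schur's Lemma via \thmref{sym-irrep}; since that theorem is already in hand, I do not foresee any real obstacle, though one must be a little careful to state Schur's Lemma in the form that handles the fact that $\bbE_\varphi \varphi^{\ot n}$ acts as zero on $\ssym d n ^\perp$ rather than as some other scalar.
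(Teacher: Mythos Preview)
Your proposal is correct and follows essentially the same approach as the paper: observe that $\rho:=\bbE_\varphi \varphi^{\ot n}$ commutes with every $V^{\ot n}$, invoke \thmref{sym-irrep} together with Schur's Lemma to conclude $\rho$ is proportional to $\psym d n$, and fix the constant via $\tr\rho=1$. Your write-up is simply more explicit than the paper's (in particular about why the action on $(\ssym d n)^\perp$ is zero and about the final cancellation of $n!$), but there is no substantive difference in strategy.
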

Here $\bbE_{\varphi}$ means that we average over a randomly chosen
unit vector $\ket\varphi \in \bbC^d$.  
\begin{proof}
Observe that $\rho:=\bbE_\varphi \varphi^{\ot n}$ commutes with all $U^{\ot
  n}$.   Thus, by Schur's Lemma and \thmref{sym-irrep}, $\rho$ must be
proportional to the identity operator on that space, which is $\psym d
n$.  To find the normalization, we observe that $\tr\rho=1$.
\end{proof}

One consequence of \propref{twirl} is that 
\be \{\dsym n \proj \varphi^{\ot n}\,
d\varphi\} \label{eq:cov-POVM}\ee
 forms a continuous POVM.

Another consequence is that averaging $\hat\varphi^{\ot n}$ over Gaussian vectors
$\ket{\hat\varphi}$ gives an operator proportional to a projector.  If the
normalization (and covariance) of the Gaussian is chosen so that $\bbE_{\hat\varphi}
\hat\varphi =  I/d$, then we can show that
\be \bbE_{\hat\varphi}\hat\varphi^{\ot n} 
=  \frac{n!}{d^n} \psym d n = d^{-n}
\sum_{\pi\in\cS_n} P_d(\pi).\label{eq:Gaussian}\ee
How?  Well, let $\ket{\hat\varphi} = x \ket\varphi$, for $x\in \bbR_+$
and $\ket\varphi$ a unit vector.  Because of the rotational invariance
of the Gaussian distribution, it follows that $x$ and $\ket\varphi$
are independent random variables.  Thus
$\bbE_{\hat\varphi}\hat\varphi^{\ot n} = \bbE_x |x|^{2n} \bbE_\varphi
\varphi^{\ot n}$.  It remains only to compute $\bbE_x |x|^{2n}$.    To
do so, let $a_j := \braket{j}{\hat\varphi}$ so that $|x|^2 = |a_1|^2 +
\cdots + |a_d|^2$.  Next, we recall the formula for a Gaussian integral:
\be
\int_{a \in \bbC^d} {\rm d}a\, e^{-\alpha(|a_1|^2 + \cdots + |a_d|^2)}
 = (\pi/\alpha)^d.\label{eq:gauss-integral}\ee
We can use this to calculate 
$\bbE_x |x|^{2n}$ by differentiating \eq{gauss-integral} with respect
to $-\alpha$ $n$ times, then dividing by the normalization
$(\pi/\alpha)^d$, and finally setting $\alpha=d$.
This yields $\bbE_x |x|^{2n} = (1+\frac{1}{d})\cdots(1+\frac{n-1}{d})$.
Combining with \eq{psym-twirl}, we obtain \eq{Gaussian}.

Alternatively, \eq{Gaussian} can be derived directly (and is sometimes
called Wick's theorem\footnote{The way to do this calculation is to
  calculate the integral of $\exp(-\sum_{i=1}^d \alpha_i |a_i|^2)$ and
  differentiate with respect to various $\alpha_i$.}), and then used
to obtain \eq{psym-twirl}.  See \cite{folland01} for a nice exposition
of this approach.

\subsection{Operators on the symmetric subspace}
For a complex vector space $V$, let $L(V)$ denote the space of operators on
$V$ and $H(V)$ the space of Hermitian operators on $V$.    Given that
$\ssym d n $ is spanned by vectors of the form $\ket\varphi^{\ot n}$,
can we say something similar about $L(\ssym d n)$ and $H(\ssym d n)$?  

Happily, in this case the matrices $\varphi^{\ot n}$ play the same
role, and we have
\begin{subequations}\label{eq:op-sym}\ba
L(\ssym d n) &= \Span_\bbC \{\varphi^{\ot n} : \ket\varphi\in\bbC^d\}
\label{eq:lin-sym}
\\
 H(\ssym d n) &= \Span_\bbR \{\varphi^{\ot n} : \ket\varphi\in\bbC^d\}
\label{eq:herm-sym}.\ea\end{subequations}

In both cases, the RHS is trivially contained in the LHS.  
Conversely, the LHS of \eq{lin-sym} can be expressed as a span of
operators of the form $(\ket\alpha\bra\beta)^{\ot n}$.  To express
$(\ket\alpha\bra\beta)^{\ot n}$ as a linear combination of terms on the RHS of \eq{lin-sym},
define $\ket{v_{x,y}} = e^{ix}\ket\alpha + e^{iy}\ket\beta$ and note
\be
(\ket\alpha\bra\beta)^{\ot n} = 
\frac{1}{(2\pi)^2}\int_0^{2\pi} {\rm d}x \int_0^{2\pi} {\rm d}y\, 
e^{in(y-x)}
(\ket{v_{x,y}}\bra{v_{x,y}})^{\ot n}
\ee
Similarly for $H(\ssym d n)$, the LHS is spanned by the operators
$(\ket\alpha\bra\beta)^{\ot n} + (\ket\beta\bra\alpha)^{\ot n}$.  We
write this operator as a real linear combination of terms from the RHS of
\eq{herm-sym} as follows:
\be
(\ket\alpha\bra\beta)^{\ot n}  + (\ket\beta\bra\alpha)^{\ot n}= 
\frac{1}{(2\pi)^2}\int_0^{2\pi} {\rm d}x \int_0^{2\pi} {\rm d}y\, 
(e^{in(y-x)}+e^{-in(y-x)})
(\ket{v_{x,y}}\bra{v_{x,y}})^{\ot n}
\ee

\subsection{The real case}

What about $\vee^n \bbR^d$?   Things are now totally different.   Let
$\ket\gamma\in \bbR^d$ be a random unit vector and
$\ket{\hat\gamma}\in\bbR^d$ be a random Gaussian vector with $\bbE
\braket{\hat\gamma}{\hat\gamma}=1$.  
To describe $\bbE \hat\gamma^{\ot n}$, we introduce some more
notation.  Let $\cM_{2n}$ be the set of perfect matchings on $[2n]$;
i.e. $n$ disjoint subsets of $[2n]$, each containing two elements.
Say that a string $i_1,\ldots,i_{2n}\in [d]$ is compatible with
$M\in\cM_{2n}$ if $i_j=i_k$ for each $\{j,k\}\in M$.  Let $S_M$ denote
the set of $i_1,\ldots,i_{2n}$ that are compatible with $M$, and
define $\sigma_M := \sum_{(i_1,\ldots,i_{2n})\in S_M}
\ket{i_1,\ldots,i_n}\bra{i_{n+1},\ldots,i_{2n}}$.
Note that $P_d(\pi) = \sigma_{M_\pi}$ where we define $M_\pi :=
\{(1,n+\pi(1)),\ldots,(n,n+\pi(n))\}$; however, other matchings do not
correspond to any permutation.
The moments of
$\hat\gamma$ are then given by
\be\bbE \hat\gamma^{\ot n} = d^{-n} \sum_{M\in\cM_{2n}} \sigma_M.
\ee
I'll skip the derivation, as it's similar to the complex case.

As an example, when $n=2$, then
$\bbE \hat\gamma^{\ot 2} = \frac{I + \swap}{d^2} + \frac{\Phi}{d}$, where
$\ket{\Phi} = d^{-1/2}\sum_{i=1}^d \ket{i,i}$ is a maximally entangled
state.  The $\ket\Phi$ term is new, and dramatically increases the
largest eigenvalue of resulting matrix.  To see why it appears,
consider a simple (univariate) Gaussian variable $x$.  If $x$ is a
complex Gaussian, then $\bbE x^2=\bbE \bar{x}^2=0$, and only $\bbE x\bar
x$ is nonzero.  However, if $x$ is a real Gaussian, then $\bbE x^2$ is
nonzero.  This means that there additional terms, corresponding to matchings not of the form $M_\pi$, that contribute to terms like
$\Phi$.  The reason that these terms lead to higher eigenvalues is
that they don't distinguish between row and column indices, and so are
not adapted to the matrix structure.

For unit vectors, the situation is similar except for the overall
normalization, which is described by a higher moment of a
$\chi^2$-distribution.  After a (skipped) calculation, one obtains
\be \bbE  \gamma^{\ot n} =
\L(\frac{d}{2}\R)^n \frac{\Gamma(d/2)}{\Gamma(n+d/2)}
\bbE \hat\gamma^{\ot n} =
\frac{1}{2^n}\frac{\Gamma(d/2)}{\Gamma(n+d/2)}
 \sum_{M\in \cM_{2n}} \sigma_M.\ee
Here $\Gamma(z)$ is the gamma function, equal to $z-1!$ for integer
$z$, and $\frac{(2z-1)!}{4^z(z-1/2)!}\sqrt{4\pi}$ for half-integer $z$.
As in the complex case, random unit vectors resemble random Gaussians
when $n$ is small relative to $d$.

\section{Estimation, cloning and the de Finetti theorem}\label{sec:chiri}
This section discusses three important applications of the symmetric
subspace, essentially following the treatment of \cite{Chiri10}, but
with some of the material on cloning from \cite{Werner98}.

Consider the following three problems.
\benum
\item {\em State estimation:} Measure $\ket{\varphi}^{\ot n}$ to obtain
  an estimate $\ket{\hat\varphi}$. Try to maximize
  $\bbE_{\varphi}|\braket{\varphi}{\hat\varphi}|^{2k}$ for some $k$.
\item {\em Cloning:} Construct a map $T$ from $n$ qudits to $n+k$ qudits
  that maximizes $\bbE_\varphi \tr \varphi^{\ot n+k} T(\varphi^{\ot
    n})$.
\item {\em de Finetti:} Given $\ket{\psi}\in\ssym d n$, how well is
  $\tr_{n-k}\psi$ approximiated by a mixture of tensor power states?
\eenum

It turns out that there are close relations between these problems.
\subsection{Estimation and Measure-and-prepare channels}
Start with state estimation.   Note that
$|\braket{\varphi}{\hat\varphi}|^{2k} = \tr \varphi^{\ot k}
\hat\varphi^{\ot k}$.  The most general strategy possible (more or
less) is to perform the 
POVM with measurement operators $M_1,\ldots,M_\ell$ (with $M_1 +
\cdots + M_\ell = \psym d n$), and upon outcome
$i$, to output the estimate $\ket{\hat\varphi_i}$.  Let $\rho_i =
\hat\varphi_i^{\ot k}$.  Then
\ba F_{\text{estimate}} &= 
\bbE_\varphi \sum_{i=1}^\ell \tr (\varphi^{\ot n}M_i) \tr (\varphi^{\ot
  k}\rho_i)
\\ &= 
\bbE_\varphi \sum_{i=1}^\ell \tr \varphi^{\ot n+k}(M_i \ot \rho_i)
\\ &= 
\tr \frac{\psym d {n+k}}{\dsym {n+k}} \sum_{i=1}^\ell (M_i \ot \rho_i)
\label{eq:replace}
\\ &\leq  
\frac{\sum_{i=1}^\ell \tr(M_i \ot \rho_i)}{\dsym{n+k}} 
\\ &=
\frac{\dsym n}{\dsym{n+k}} 
\label{eq:est-bound}
\ea
On the other hand, \eq{est-bound} is achieved by using the continuous
POVM from \eq{cov-POVM}.  Why?   We replace \eq{replace} with
\be
\tr \frac{\psym d {n+k}}{\dsym {n+k}} \bbE_{\hat\varphi} \dsym n
\hat\varphi^{\ot n} \ot \hat\varphi^{\ot k} = 
\frac{\dsym n}{\dsym{n+k}}.\ee

This analysis has also yielded the solution to a related problem,
which is to find the optimal ``measure-and-prepare'' channel mapping
$\ket{\varphi}^{\ot n}$ to an approximation of $\ket{\varphi}^{\ot
  k}$.  Measure-and-prepare channels are of the form 
\be T(\sigma) =
\sum_i \tr (M_i \sigma) \rho_i\label{eq:MP}\ee
 and are  also called
``entanglement-breaking'' channels~\cite{HHR03}, because it turns out that the form
 \eq{MP} is equivalent to the condition that $(T \ot I)$ maps all
 states to separable states.
The optimal measure-and-prepare channel is denoted $\MP_{n\ra k}$ and
is
\be \MP_{n\ra k}(\rho) = \tr_n \bbE_\varphi \dsym n
\varphi^{\ot n+k} (\rho \ot I^{\ot k}) = 
 \tr_n\frac{\dsym n}{\dsym {n+k}} \psym{d}{n+k} (\rho \ot I^{\ot k}).
 \label{eq:opt-MP}\ee

\subsection{Optimal cloning}
The no-cloning theorem says that $\ket{\varphi}\ra \ket{\varphi}\ot
\ket{\varphi}$ is impossible. But in fact $\ket{\varphi}^{\ot n} \ra
\ket{\varphi}^{\ot n+k}$ is also impossible for any $n,k>0$.  Still, we
can try to approximate this map.  It turns out that the optimal
cloning map (due to \cite{Werner98}) is 
\be\clone_{n\ra n+k}(\rho) = \psym{d}{n+k} (\rho \ot I^{\ot k})
\psym{d}{n+k} \frac{\dsym n}{\dsym {n+k}}.
\label{eq:clone}\ee
Note that normally $\rho = \varphi^{\ot n}$.

\eq{clone} is rather remarkable.  At first, it's not even obvious that
$\clone_n^{n+k}$ is trace-preserving, but this can be deduced with the
help of \corref{commutant}.  Optimality takes more work, but is similar
in spirit to the optimality of the above estimation procedure.

The main theorem of \cite{Chiri10} gives a relation between $\MP$ and
$\clone$.  Specifically
\begin{thm}[Chiribella's theorem\cite{Chiri10}]\label{thm:Chiri}
\be \MP_{n\ra k}(\rho) = \sum_{s=0}^k \frac{\binom{n}{s}
  \binom{d+k-1}{k-s}} {\binom{d+n+k-1}{k}} 
\clone_{s\ra  k}(\tr_{n-s}\rho)
\label{eq:MP-clone}\ee
\end{thm}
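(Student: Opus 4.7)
The plan is to reduce to the case $\rho=\varphi^{\otimes n}$ by linearity, and then verify the identity by comparing matrix elements $\bra{\xi^{\otimes k}}(\cdot)\ket{\xi^{\otimes k}}$ on both sides for arbitrary unit $\ket\xi\in\bbC^d$. Both sides of \eqref{eq:MP-clone} are linear in $\rho$, and the formula is naturally interpreted for $\rho\in L(\ssym d n)$ (the case of interest for applications, and on which both sides genuinely agree). By \eqref{eq:lin-sym}, $L(\ssym d n)$ is spanned by $\{\varphi^{\otimes n}:\ket\varphi\in\bbC^d\}$, so it suffices to take $\rho=\varphi^{\otimes n}$. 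The resulting operators on both sides lie in $L(\ssym d k)$, which by the same fact is spanned by tensor powers $\{\xi^{\otimes k}\}$; hence it is enough to verify equality of $\bra{\xi^{\otimes k}}(\cdot)\ket{\xi^{\otimes k}}$ on both sides for every unit $\ket\xi$.

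On the RHS, $\tr_{n-s}\varphi^{\otimes n}=\varphi^{\otimes s}$ together with $\psym d k\ket{\xi^{\otimes k}}=\ket{\xi^{\otimes k}}$ gives
\[
\bra{\xi^{\otimes k}}\clone_{s\to k}(\varphi^{\otimes s})\ket{\xi^{\otimes k}}=\frac{\dsym s}{\dsym k}\,t^s,\qquad t:=|\braket\varphi\xi|^2,
\]
so the RHS of \eqref{eq:MP-clone} at this matrix element is the polynomial $\dsym k^{-1}\sum_{s=0}^k\lambda_s\,\dsym s\,t^s$ with $\lambda_s=\binom{n}{s}\binom{d+k-1}{k-s}/\binom{d+n+k-1}{k}$. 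On the LHS, \eqref{eq:opt-MP} together with $\varphi^{\otimes n+k}=\varphi^{\otimes n}\otimes\varphi^{\otimes k}$ yields
\[
\bra{\xi^{\otimes k}}\MP_{n\to k}(\varphi^{\otimes n})\ket{\xi^{\otimes k}}=\frac{\dsym n}{\dsym{n+k}}\bra{\varphi^{\otimes n}\otimes\xi^{\otimes k}}\psym d{n+k}\ket{\varphi^{\otimes n}\otimes\xi^{\otimes k}}.
\]
Expanding $\psym d{n+k}$ via \eqref{eq:psym-group-avg}, each inner product $\bra{\varphi^{\otimes n}\otimes\xi^{\otimes k}}P_d(\pi)\ket{\varphi^{\otimes n}\otimes\xi^{\otimes k}}$ factorizes position by position into overlaps of $\ket\varphi$ and $\ket\xi$ and evaluates to $|\braket\varphi\xi|^{2s(\pi)}=t^{s(\pi)}$, where $s(\pi):=|\pi(\{n+1,\ldots,n+k\})\cap\{1,\ldots,n\}|$ is the number of positions in the last-$k$ block that $\pi$ sends into the first-$n$ block. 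A direct count shows that exactly $\binom{n}{s}\binom{k}{s}\,n!\,k!$ permutations in $\cS_{n+k}$ have crossing number $s$ (choose the crossing sets on each side independently, then bijections within and between blocks), so the LHS becomes $\frac{\dsym n}{\dsym{n+k}\binom{n+k}{n}}\sum_{s=0}^{\min(n,k)}\binom{n}{s}\binom{k}{s}\,t^s$.

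Matching coefficients of $t^s$ on the two sides reduces \eqref{eq:MP-clone} to the identity
\[
\frac{\dsym n\,\binom{k}{s}}{\dsym{n+k}\binom{n+k}{n}}=\frac{\binom{d+k-1}{k-s}\,\dsym s}{\binom{d+n+k-1}{k}\,\dsym k},
\]
which holds for each $s$ by expanding every binomial coefficient into factorials. The main obstacle is the permutation-counting calculation of $s(\pi)$ in the LHS expansion; once that is in place, the closing binomial identity is a routine factorial manipulation.
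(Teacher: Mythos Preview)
Your proof is correct and follows essentially the same route as the paper's: reduce by linearity to $\rho=\varphi^{\otimes n}$ using \eqref{eq:lin-sym}, test both sides against $\xi^{\otimes k}$ (the paper writes this as $\tr\beta^{\otimes k}(\cdot)$ with $\beta=\xi$), expand $\psym d{n+k}$ over $\cS_{n+k}$ and count permutations by crossing number---which the paper phrases as a hypergeometric distribution---and finish with the same binomial/factorial identity. Your permutation count $\binom{n}{s}\binom{k}{s}n!k!$ is exactly the paper's $\frac{\binom{k}{s}\binom{n}{s}}{\binom{n+k}{k}}(n+k)!$, so the two arguments coincide step for step.
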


We give a slightly simpler proof than the one in \cite{Chiri10}.

\begin{proof}
First, we observe (following \cite{Chiri10}) that the space of density
matrices on $\ssym d n$ is spanned by vectors of the form $\proj
\varphi^{\ot n}$.  Thus, to compute the action of $\MP_{n\ra k}$, it
suffices to calculate
$$f(\alpha,\beta) := \tr \beta^{\ot k} \MP_{n\ra k}(\alpha^{\ot n})$$
for all unit vectors $\ket\alpha,\ket\beta\in \bbC^d$. Further, the
unitary covariance of quantities involved means that $f(\alpha,\beta)$
depends only on the scalar $x:=\tr \alpha\beta$.  Let $f(x) := f(\alpha,\beta)$.

Using the definition in \eq{opt-MP}, we see that
\begin{subequations}\label{eq:anal-LHS}
\ba f(x) &= \frac{\dsym n}{\dsym{n+k}}
\tr (I^{\ot n} \ot \beta^{\ot k}) \psym{d}{n+k}
(\alpha^{\ot n} \ot I^{\ot k}) \\
&= \frac{\dsym n}{\dsym{n+k}}
\tr \psym{d}{n+k}(\alpha^{\ot n} \ot \beta^{\ot k}) \\
&= \frac{\dsym n}{\dsym{n+k}}\sum_{s=0}^k
\frac{\binom{k}{s}\binom{n}{s}}{\binom{n+k}{k}}
x^s 
\ea
\end{subequations}
The term on the last line is the probability that a random $\pi \in
\cS_{n+k}$ satisfies $|\pi([n]) \cap [n]| = s$.  This is a hypergeometric
distribution, equivalent to the probability that when $n$ balls are
drawn without replacement from a bucket of $n$ white balls and $k$
black balls, that the resulting sample contains $n-s$ white balls and
$s$ black balls.

On the other hand, to analyze the RHS of \eq{MP-clone}, we calculate
\begin{subequations}\label{eq:anal-RHS}
\ba \tr \beta^{\ot k}\clone_{s\ra  k}(\tr_{n-s}\alpha^{\ot n})
& =\tr \beta^{\ot k}\clone_{s\ra  k}(\alpha^{\ot s})  \\
& =\tr \beta^{\ot k}\psym{d}{k}(\alpha^{\ot s} \ot I^{\ot
  k-s}) \psym{d}{k}\frac{\dsym s}{\dsym k}\\
& =\frac{\dsym s}{\dsym k} x^s
\ea
\end{subequations}

Finally we calculate
\begin{subequations}\label{eq:rearrange}
\ba \frac{\dsym n \dsym k}{\dsym{n+k} \dsym s}
\frac{\binom{k}{s}\binom{n}{s}}{\binom{n+k}{k}}
&= \frac{\binom{d+k-1}{k}\binom{k}{s}}{\binom{d+s-1}{s}}
\cdot \binom{n}{s}\cdot 
\frac{\binom{d+n-1}{n}}{\binom{d+n+k-1}{n+k}\binom{n+k}{k}}
\\ & = \frac{ \binom{d+k-1}{k-s}\binom{n}{s}} 
{\binom{d+n+k-1}{k}}\ea

\end{subequations}
Combining \eq{anal-LHS}, \eq{anal-RHS} and \eq{rearrange}, we obtain \eq{MP-clone}.
\end{proof}

Inspired by Chiribella's theorem, we define the polynomials
\be M_k^{(d,n)}(x) = \sum_{s=0}^k \frac{\binom{n}{s}
  \binom{d+k-1}{k-s}} {\binom{d+n+k-1}{k}} x^s
= \sum_{s=0}^k M^{(d,n)}_{k,s} x^s
\label{eq:M-poly}\ee
The coefficients $M_{k,s}^{(d,n)}$ correspond to a hypergeometric
distribution whose moment-generating
function is given by $M_k^{(d,n)}(e^{t})$.

We observe that these polynomials can be described in terms of Jacobi polynomials as 
\be M_k^{(d,n)}(x) = \frac{(x-1)^k}{\binom{d+n+k-1}{k}} P_k^{(n-k, d-1)}\L(\frac{x+1}{x-1}\R),\ee
and so are orthogonal with respect to the weight
$(1-y)^\alpha(1+y)^\beta\mathrm{d}y$ over $y\in[-1,1]$, where
$y=(x+1)/(x-1)$, $\alpha=n-k$ and $\beta=d-1$.  Unfortunately, if
$x\in [0,1]$, then $y\leq -1$, so this standard interpretation of the
Jacobi polynomials appears not to apply.  Similarly, the
interpretation of Jacobi polynomials as matrix elements of irreps of
$\cU_2$ only applies directly when the argument is in the range
$[-1,1]$.  Jacobi polynomials have previously appeared in analysis of
de Finetti errors in \cite{NOP09}, and it is possible that an
alternate derivation of \eq{M-poly} might proceed via the
representation theory of $\cU_d$ rather than $\cS_n$.  Similar
polynomials have also been analyzed in terms of functions of two
variables~\cite{GrunbaumVZ04}. 

\subsection{de Finetti theorem}
What's the point of all these expansions?  Who cares if we can shuffle
a bunch of permutations around and relate one thing that we didn't
care that much about (the $n\ra k$ measure-and-prepare channel) to the far more
obscure task of choosing $s$ from a hypergeometric distribution,
tracing out all but $s$ subsystems, and cloning back up to $k$?

One application mentioned in \cite{Chiri10} is to give an alternate
proof of the de Finetti theorem.  Observe that
\be M^{(d,n)}_{k,k} = \frac{\binom{n}{k}}{\binom{d+n+k-1}{k}}
= \frac{n!d+n-1!}{n-k!d+n+k-1!} \geq \L(1-\frac{d+k}{n+d}\R)^k
 \geq 1- \frac{k(d+k)}{n+d}
\label{eq:Mdnkk-bound}
\ee
Thus, \eq{MP-clone} implies that
\be \MP_{n\ra k} = (1-\eps)\tr_{n-k} + \eps \cN,\ee
where $\eps \leq k(d+k)/(n+d)$ (assuming this quantity is $\leq 1$)
and $\cN$ is a trace-preserving quantum operation.  Thus
\be \| \MP_{n\ra k} - \tr_{n-k} \|_\diamond \leq 2\eps.\label{eq:deF}\ee
This establishes the de Finetti theorem in an elegant form: given a
symmetric state on $n$ qudits, tracing out $n-k$ qudits yields a state
that is within $2\eps$ of a mixture of tensor powers.  The advantages
of this formula are that it is concise, it naturally handles the case
of symmetric states that are entangled with reference systems and it
gives an explicit description of how to produce the approximation.

In fact, this approach can also yield the so-called
exponential de Finetti theorem of \cite{Ren05,KM09}.  This is the only
original result in this section.

To introduce the exponential de Finetti theorem, we need the idea of
an ``almost-product state'' introduced in \cite{Ren05} (see also \cite{Renner07}).  Define the
$(k,r,d)$-almost product states to be
$$\bigcup_{\ket\varphi\in\bbC^d} \Span \{\psym d k \ket{\varphi}^{\ot
  k-r}\ot\ket\psi : \ket\psi\in(\bbC^d)^{\ot r}\}.$$
This set is {\em not} a linear subspace, but see \cite{KM09} for a discussion of almost-product states from a representation-theoretic perspective.
Note that the set of almost-product states has no real classical analogue, and indeed the
exponential de Finetti theorem (stated below) fails in the classical
case.\footnote{I am grateful to Matthias Christandl and Ben Toner for
  sharing with me their unpublished manuscript which proves this
  point.  The idea of their proof is to compare variances.  If we choose a random sample of $(1-o(1))n$ positions from $0^{n/2}1^{n/2}$, then the resulting distribution of Hamming weights will have $o(n)$ variance.  However, any almost-product distribution that is approximately balanced must have $\Omega(n)$ variance.  With some technical effort, they then translate this into a lower bound on the trace distance between the resulting distributions.
}  Observe that $\clone_{k-s\ra k}$ maps product states to $(k,s,d)$-almost-product
states, and thus $\clone_{k-s\ra k}\circ\MP_{n\ra k-s}$ maps symmetric
states to $(k,s,d)$-almost-product states.
\begin{thm}[Exponential de Finetti theorem\cite{Renner07,KM09}]
For any $0\leq r \leq k$, there exist $x_0,\ldots,x_r\in\bbR$ such
that
$|x_s| \leq (2\delta)^s/(1-\delta)$,  $\eps := \delta^r /
(1-3\delta)$, $\delta := k(d+k)/n$ and 
\be \L\|\tr_{n-k} - \sum_{s=0}^r x_s \clone_{k-s\ra k}\circ\MP_{n\ra
  k-s}\R\|_\diamond \leq \eps,\label{eq:exp-deF}, \ee
where the maps in \eq{exp-deF} are restricted to act on $\vee^n\bbC^d$.
\end{thm}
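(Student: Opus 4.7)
The plan is to iterate Chiribella's theorem (\thmref{Chiri}) to build the desired expansion. I would first establish the transitivity relation $\clone_{u\to k}\circ\clone_{t\to u}=\clone_{t\to k}$, which is immediate from the identity $\psym{d}{k}(\psym{d}{u}\ot I^{\ot k-u})=\psym{d}{k}$, itself following from \eq{psym-eats-pi} applied to the subgroup $\cS_u\subset\cS_k$ fixing the last $k-u$ positions. Applying \thmref{Chiri} to $\MP_{n\to k-s}$ and composing on the left with $\clone_{k-s\to k}$ then yields
\be
\clone_{k-s\to k}\circ\MP_{n\to k-s}=\sum_{t=0}^{k-s}M^{(d,n)}_{k-s,t}\,\clone_{t\to k}\circ\tr_{n-t}.
\ee

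Grouping the candidate combination $\sum_{s=0}^{r}x_s\,\clone_{k-s\to k}\circ\MP_{n\to k-s}$ by the index $t$ gives coefficient $c_t:=\sum_{s=0}^{\min(r,k-t)}x_s\,M^{(d,n)}_{k-s,t}$ on $\clone_{t\to k}\circ\tr_{n-t}$. Since $\tr_{n-k}=\clone_{k\to k}\circ\tr_{n-k}$ corresponds to $c_k=1$ and $c_t=0$ for $t<k$, I would pin down the $x_s$ by enforcing $c_k=1$ and $c_{k-j}=0$ for $j=1,\ldots,r$. This is a lower-triangular system with nonzero diagonal entries $M^{(d,n)}_{k-j,k-j}$, uniquely solved by $x_0=1/M^{(d,n)}_{k,k}$ and
\be
x_s=-\frac{1}{M^{(d,n)}_{k-s,k-s}}\sum_{s'=0}^{s-1}x_{s'}\,M^{(d,n)}_{k-s',k-s}\qquad(1\le s\le r).
\ee
The residual operator is then $-\sum_{t<k-r}c_t\,\clone_{t\to k}\circ\tr_{n-t}$, each summand being a CPTP map on $\vee^n\bbC^d$, so the triangle inequality bounds its diamond norm by $\sum_{t<k-r}|c_t|$.

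The third step is hypergeometric tail estimation. Setting $\mu_{k',j}:=M^{(d,n)}_{k',k'-j}$, the explicit formula gives the ratio $\mu_{k',j}/\mu_{k',j-1}=(k'-j+1)(d+k'-j)/\bigl(j(n-k'+j)\bigr)$, which is $O(\delta/j)$ uniformly for $k'\le k$; combined with $\mu_{k',0}\ge 1-\delta$ from \eq{Mdnkk-bound}, this yields concentration of the form $\mu_{k',j}\le(2\delta)^j/j!$. Inserting this estimate into the recursion and inducting on $s$ gives the stated bound $|x_s|\le(2\delta)^s/(1-\delta)$. Applying the same tail estimates to the residual coefficients $c_t$ for $t<k-r$ and then summing a geometric-type series in $\delta$ (swapping the order of summation over $s$ and $t$) produces the diamond-norm bound $\epsilon=\delta^r/(1-3\delta)$.

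The main obstacle is the careful bookkeeping required to extract the precise constants $(2\delta)^s/(1-\delta)$ and $\delta^r/(1-3\delta)$ on the nose. The structural argument---a triangular system exactly annihilating the top $r+1$ coefficients, with a geometrically decaying tail of residuals---is robust, but squeezing the stated numerics out of the hypergeometric combinatorics requires several delicate choices in the intermediate bounds and a careful handling of the $(1-\delta)$ and $(1-3\delta)$ denominators, which are the accumulated losses from bounding $\mu_{k',0}$ below and from summing the residual geometric series, respectively. An alternative route might exploit the generating polynomials $M_{k'}^{(d,n)}(y)$ from \eq{M-poly} and their closeness to $y^{k'}$, but the direct combinatorial path sketched above is the most transparent.
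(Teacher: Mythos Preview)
Your proposal is correct and follows essentially the same route as the paper. The paper also composes Chiribella's theorem with $\clone_{k-s\to k}$ (using the same transitivity $\clone_{b\to c}\circ\clone_{a\to b}=\clone_{a\to c}$), rearranges to express $\tr_{n-k}$ in terms of the $B_s:=\clone_{k-s\to k}\circ\MP_{n\to k-s}$ plus residual $A_s:=\clone_{k-s\to k}\circ\tr_{n-(k-s)}$ terms, and then iteratively substitutes, tracking bounds $|y_s^{(r)}|\le 2^r\delta^s$ on the residual coefficients via the same hypergeometric ratio estimates (specifically $1/M^{(d,n)}_{k,k}\le(1-\delta)^{-1}$ and $M^{(d,n)}_{k,k-s}/M^{(d,n)}_{k,k}\le\delta^s$). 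Your framing as ``solve a lower-triangular linear system'' is just the matrix description of the paper's iterative substitution; the induction and the estimates are the same.
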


Perhaps a more natural formulation comes from taking $r=k$, so that
\be \tr_{n-k} = \sum_{s=0}^k x_s \clone_{k-s\ra k}\circ\MP_{n\ra
  k-s}, \label{eq:exp-deF-alt}\ee
with again the bound $|x_s| \leq (2\delta)^s/(1-\delta)$ for each
$s$.  

By contrast, the error in \cite{Renner07} is $\leq 3
(n-k)^d\exp(-\frac{(r+1)(n-k)}{n})$, and \cite{KM09} has a similar
bound.  Our result is thus weaker when $n-k$ is small (say
$\sim n^{2/3}$), but stronger when $r$ is small and $d$ is large.
The likely culprit for this disadvantage is the fact that we upper-bound an
alternating sum by taking the absolute value of each term.

\begin{proof}
The idea is to write $\tr_{n-k}$ as a linear combination of
$\clone_{k-s\ra k}\circ\MP_{n\ra k-s}$ by inverting the formula
\eq{MP-clone}.  For brevity, fix $n,k,d$, let $A_s$ denote
$\clone_{k-s\ra k}\circ\tr_{n-(k-s)}$ and let $B_s$ 
denote $\clone_{k-s\ra k}\circ\MP_{n\ra k-s}$.  In this notation, we have
$B_0 = \sum_{s=0}^k M_{k,k-s}^{(d,n)} A_s$.
Observe also that $\clone_{b\ra  c}\circ\clone_{a\ra b}=\clone_{a\ra c}$.

We now rearrange \eq{MP-clone} to obtain
\be A_0 = \frac{B_0}{M_{k,k}^{(d,n)}} - \sum_{s=1}^{k} 
\frac{M_{k,k-s}^{(d,n)}}{M_{k,k}^{(d,n)}}A_s
\label{eq:invert-MP-clone}\ee
From \eq{Mdnkk-bound} we have that 
\be \frac{1}{M_{k,k}^{(d,n)}} \leq
\L(1-\frac{k(d+k)}{n+d}\R)^{-1} \leq (1-\delta)^{-1}
\label{eq:bound-invMkk}\ee
Similarly,  
\be\frac{M_{k,k-s}^{(d,n)}}{M_{k,k}^{(d,n)}}
 = \frac{\binom{n}{k-s} \binom{d+k-1}{s}}{\binom{n}{k}}
= \binom{k}{s} \frac{(d+k-s)\cdots(d+k-1)}{(n-k)\cdots(n-k+s-1)}
\leq \L(\frac{k(d+k)}{n}\R)^s
 = \delta^s
\label{eq:bound-Mdn}\ee

We now claim that for each $r$, there exists $x_0,\ldots,x_r,y_{r+1}^{(r)},\ldots,y_k^{(r)} \in \bbR$
such that
\begin{subequations}\label{eq:exp-deF-induction}
\be A_0 = \sum_{s=0}^{r-1} x_s B_s + \sum_{s=r}^k y_s^{(r)} A_s
\label{eq:A-expansion}\ee
and the coefficients satisfy 
\be 
|y_s^{(r)}| \leq 2^r \delta^s
\qquad \text{and}\qquad
|x_s| \leq  \frac{|y_s^{(s)}|}{1-\delta} \leq \frac{(2\delta)^s}{1-\delta}
\label{eq:xy-bound}
\ee\end{subequations}
We prove \eq{exp-deF-induction} by induction.  The $r=0$ case is trivial; we simply have $y_0^{(0)}=1$.
Next, for $r\geq 0$, we assume that \eq{exp-deF-induction} holds for $r$ and attempt to prove it for $r+1$. First we replace the $y_r^{(r)} A_r$ term in \eq{A-expansion} with the linear combination of $B_r$ and $A_{r+1},\ldots,A_k$ given by \eq{invert-MP-clone}, to obtain 
\be x_r =  \frac{y_r}{M_{r,r}^{(d,n)}}
\qquad \text{and}\qquad
y^{(r+1)}_s = y^{(r)}_s
- \frac{M_{k-r,k-s}^{(d,n)}}{M_{k-r,k-r}^{(d,n)}} y_{r}^{(r)}\ee
Using \eq{bound-invMkk} we obtain the claimed bound on $|x_r|$ in \eq{xy-bound}.  To obtain the claimed bound on $|y_s^{(r+1)}|$, we use induction and \eq{bound-Mdn} to argue that $|y_s^{(r+1)}| \leq 2^r\delta^s + \delta^{s-r}\cdot 2^r\delta^r = 2^{r+1}\delta^s$.

\end{proof}

\subsubsection{Applications of the de Finetti
  theorem}\label{sec:deF}
The de Finetti theorem has an amazing array of applications, but these
are not entirely obvious upon first inspection.  We wil avoid delving into
them deeply here, but single out only two. 

\benum
\item {\em Extensive quantities.}  Often we are interested in {\em
    extensive} properties of a state, such as energy or entropy, that
  scale linearly with the number of copies of a state.  In other
  words, they satisfy $f(\rho^{\ot n}) = nf(\rho)$.  In this case, the
  de Finetti approximation provides a way to understand extensive
  properties of symmetric states by reducing to the case of density
  matrices on single systems.  See \cite{Renner07} for more discussion
  of this point, \cite{Ren05} for an application to quantum key
  distribution or \cite{fannes06} for an application to mean-field Hamiltonians.
\item {\em Approximating separable states.} The set of separable
  density matrices (i.e. the convex hull of
  $\proj{\alpha}\ot\proj\beta$) is notoriously hard to
  approximate~\cite{Gurvits03, BeigiS10}, so we are forced to use
  heuristics and relaxations.  One of the leading relaxations comes
  from the de Finetti theorem.  We say that $\rho^{AB}$ is
  $k$-extendible if there exists a state $\sigma^{AB_1\cdots B_k}$
  such that $\supp\sigma^{B_1\cdots B_k}\subseteq \vee^k B$ and
  $\rho^{AB} = \sigma^{AB_1}$.  All separable states are clearly
  $k$-extendible for all $k$, and it can also be shown that all
  non-separable (i.e. entangled) states {\em fail} to be
  $k$-extendible for some, perhaps large, $k$.  Thus, the set of
  $k$-extendible states comprises a hierarchy of relaxations of the
  set of separable states~\cite{DohertyPS04}.  This can be understood
  in terms of the fact that by the de Finetti theorem, tracing out
  $B_2\ldots B_k$ is similar to applying the entanglement-breaking
  channel $\MP_{k\ra 1}$ to $B_1\ldots B_k$.  (Alternate intuition
  comes from the idea of ``monogamy of entanglement,'' which states
  that $A$ cannot simultaneously be highly entangled with all of the
  $B_i$; e.g. see \cite{Yang06}.)  An intriguing open problem is to
  understand how $k$-extendability combines with the PPT condition;
  see \cite{NOP09} for some work along these lines.
\eenum

Research on the de Finetti theorem continues, and the interested
reader is referred to \cite{KM09} for a
far-reaching representation-theoretic generalization or
\cite{brandao11} for a powerful variant that uses a different norm
than the trace distance (a line of work continued in \cite{BH-local}).

\section{Netless concentration of measure}\label{sec:conc}

In this section, we show how the symmetric subspace can be a way to
prove large-deviation bounds in a manner analogous to controlling
higher moments of real random variables.  The techniques are (to my
knowledge) new, but the results obtained are not substantially
different from previous results.  One appealing feature of these
results, though, is the unified derivation of previously unrelated
statements about the minimum entanglement of random subspaces.

\subsection{Introduction}
A useful trick in high-dimensional geometry is to combine a concentration-of-measure bound with a union bound over an epsilon net.  The idea is that we have a metric space $S$, a random function $f:S\ra \bbR$ (usually with the Lipschitz property $|f(x)-f(y)|\leq d(x,y)$) and an $\eps$-net $N\subset S$. 
The concentration-of-measure bound states that for any $x$, $f(x)$ has extremely low probability, say $\leq \eta$, of deviating from its mean value $\mu$ by more than $\delta$.  This implies that with probability $\geq 1 - |N|\eta$ (and often we only need that this probability is $>0$),  we have $|f(x)-\mu|\leq \delta$ for all $x\in N$, and by the Lipschitz property, $|f(x)-\mu|\leq \eps+\delta$ everywhere.  In some cases, we can do better.  For example, if $f$ is a seminorm (and $S$ satisfies some more conditions, maybe having diameter 1) then we can obtain the often stronger bound $|f(x)-\mu| \leq \delta / (1-\eps)$.

There is an alternate way to view the first sort of bound.  For an event $E$, define $[E]$ to be the random variable that is 1 if $E$ is true and 0 if $E$ is false.  Then define 
$$g(x) := [f(x) \geq \mu + \delta].$$
By our concentration-of-measure assumption, for any $x$, $\bbE_f[g(x)] \leq \eta$.  Now fix a normalized measure on $S$.  Then $\bbE_x \bbE_f[g(x)] \leq \delta$ as well.  For any $x$, let $B(x,\eps)$ denote the ball of radius $\eps$ around $x$.  Assume further that our measure on $S$ has the property that $|B(x,\eps)|=V(\eps)$, i.e. is independent of $x$.  Now let
 $$p=\Pr\{ \exists \hat x\in S \,:\, f(\hat x)\geq \mu+\eps+\delta\}. $$  
In case such an $\hat x$ exists, the Lipschitz condition on $f$ guarantees that $f(x)\geq \mu+\delta$ for all $x\in B(\hat x,\eps)$.
Therefore we have $\bbE_x \bbE_f [g(x)] \geq p V(\eps)$.  Combining our two bounds on $\bbE_x \bbE_f[ g(x)]$, we find that $p \leq \eta / V(\eps)$.  Since $V(\eps)^{-1} \leq |N| \leq V(\eps/2)^{-1}$ for a minimal $\eps$-net $N$, this yields bounds that are at least as strong as the $\eps$-net-based approach, although not dramatically better.

However, this approach can be further improved by different choices of
function $g$.  Indeed, we need only that $\bbE_x \bbE_f [g(x)]$ is
extremely small, and that conditioned on $f(x)$ being large for some
value of $x$, $\bbE_f [g(x)]$ is also large.  This is the idea behind
Chebyshev's inequality and the Bernstein trick, in which $g(x)$ is
taken to be either $(f(x)-\mu)^2$ or $e^{y f(x)}$, respectively.  The
idea of such choices of $g$ is to amplify large deviations of $f$ so
that they make have a greater effect on the expectation.  These
techniques have been useful in quantum information theory for proving
concentration bounds, for example in \cite{BHLSW03}, where a moment
generating function was used, and in \cite{ADHW06}, where bounding
the second moment was sufficient to produce powerful results. 

One reason that $g(x)=e^{y f(x)}$ is an appealing choice is Cramer's
theorem~\cite{DemboZ09}, which, up to technical caveats, is as
follows. When $x$ is of the form $(x_1,\ldots,x_n)$ for
i.i.d. $x_1,\ldots,x_n$ and with $f(x) := \frac{1}{n}\sum_{i=1}^n
F(x_i)$, then (i) $\Pr\{f(x) \geq a\} \sim \exp(-n s(a) - o(n))$ for
some $s(a)$, and (ii) optimizing over $y$ can yield the nearly-optimal
bound of $\exp(-ns(a))$.  (To relate to the earlier discussion, we
have $a = \mu+\delta$.) 
Indeed, $s(a) = \sup_y (ya - \ln\bbE_{x_1} [e^{yF(x_1)}])$.  

However, it turns out that taking $g(x)= x^p$ and optimizing over $p$
always yields a bound that is at least as powerful than when $g(x)$ is
of the form $e^{yf(x)}$~\cite{DubhashiP09}. For this to work, we need
that $f(x)\geq 0$ with probability 1, but no longer need the
i.i.d. assumption. To see why $g(x)=x^p$ is at least as good a choice,
let $\gamma(a) = \min_{p\in \bbN} \bbE[f(x)^p] / a^p$ be the optimal
bound obtainable by optimizing over $p$.  Then $\bbE[f(x)^p] \geq
\gamma(a)\cdot a^p$ for all nonnegative integers $p$, and thus
$\bbE[e^{yx}] \geq \sum_{p\geq 0} \gamma(a)\, y^pa^p/p! = \gamma(a)\,
e^{ya}$ and finally $e^{-s(a)}\geq \gamma(a)$. 

In this section, we will focus on showing that random subspaces are
likely to contain only highly entangled states.  Thus our results will
be similar in many ways to those of \cite{HLW06}, which used the more
conventional methods of $\eps$-nets and Levy's Lemma (which is based
on Gaussian concentration, which in turn can be derived from
moment-generating functions).  The advantages of this approach is that
the proof is somewhat more self-contained and the resulting bounds are
now strong ehough to unify several different previous results.  The
main disadvantage compared with Levy's Lemma is a loss in flexibility,
a limitation whose consequences we will return to below.

\subsection{Statement of results}
Define $S^d$ to be the set of unit vectors in $\bbC^d$.   All
expectations are taken with respect to unitarily invariant measures.  
In this part of the paper, we will always consider the
following scenario.  There are $k$ quantum systems of dimensions $d_1,
d_2, \ldots d_k$, with $D := d_1d_2\cdots d_k$.  For any Hermitian operator
$\Pi$ acting on $\bbC^D$, we will define
$$\nu(\Pi) := \max \L\{ \tr (\varphi_1 \ot \cdots \ot \varphi_k)\Pi :
\ket{\varphi_1}\in S^{d_1},\ldots,\ket{\varphi_k}\in S^{d_k}\R\}.$$
We will generally consider the case when $\Pi$ is a random orthogonal
projector of rank $r$.

\begin{thm}\label{thm:general}
Let $\Pi$ be a random rank-$r$ orthogonal projector acting on
$\bigotimes_{i=1}^k \bbC^{d_i}$.  Then for any $\gamma>0$, 
\be \Pr_\Pi \L[\nu(\Pi)\geq \gamma\R] \leq \inf_n 
\frac{\binom{r + n - 1}{n}
\prod_{i=1}^k \binom{d_i + n - 1}{n}}
{\gamma^n \binom{D + n-1}{n}}
\label{eq:orig-product-bound}\ee
\end{thm}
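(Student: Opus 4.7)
The plan is to apply the $p$-th moment method described in the section's introduction, taking the exponent to be a nonnegative integer $n$ (the same variable that appears in the infimum). Markov's inequality gives
$$\Pr_\Pi[\nu(\Pi)\geq \gamma] = \Pr_\Pi[\nu(\Pi)^n\geq \gamma^n] \leq \frac{\bbE_\Pi[\nu(\Pi)^n]}{\gamma^n}$$
for every such $n$, so the whole task reduces to upper bounding $\bbE_\Pi[\nu(\Pi)^n]$ by the numerator of the right-hand side of \eq{orig-product-bound}; taking the infimum over $n$ at the end then yields the theorem.

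The first step I would carry out is to strip the maximum over product states out of $\nu(\Pi)^n$. For any unit product vector $\ket{\vec\varphi}=\ket{\varphi_1}\otimes\cdots\otimes\ket{\varphi_k}\in\bbC^D$ I would write $(\tr[\proj{\vec\varphi}\,\Pi])^n=\tr[(\proj{\vec\varphi})^{\otimes n}\Pi^{\otimes n}]$ and bound the rank-one operator $(\proj{\vec\varphi})^{\otimes n}$ by the projector
$$\tilde P:=\psym{d_1}{n}\otimes\cdots\otimes\psym{d_k}{n},$$
viewed as acting on $(\bbC^D)^{\otimes n}$ by symmetrizing, independently for each $i\in[k]$, the $n$ copies of the $i$-th factor. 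After regrouping tensor factors, $\ket{\vec\varphi}^{\otimes n}=\ket{\varphi_1}^{\otimes n}\otimes\cdots\otimes\ket{\varphi_k}^{\otimes n}$ lies in the image of each $\psym{d_i}{n}$, so $\tilde P\ket{\vec\varphi}^{\otimes n}=\ket{\vec\varphi}^{\otimes n}$ and therefore $(\proj{\vec\varphi})^{\otimes n}\leq \tilde P$. This gives the $\vec\varphi$-independent bound $\nu(\Pi)^n\leq \tr[\tilde P\,\Pi^{\otimes n}]$.

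The second step is to compute $\bbE_\Pi\tr[\tilde P\,\Pi^{\otimes n}]$. Applying \propref{twirl} to each factor separately gives the ``twirl'' representation
$$\tilde P = \prod_{i=1}^k\binom{d_i+n-1}{n}\;\bbE_{\vec\varphi}\,(\proj{\vec\varphi})^{\otimes n},$$
where $\ket{\varphi_1},\ldots,\ket{\varphi_k}$ are now independently Haar-random. Swapping the expectations over $\Pi$ and $\vec\varphi$ reduces everything to computing $\bbE_\Pi\langle\vec\varphi|\Pi|\vec\varphi\rangle^n$ for a single fixed unit vector $\ket{\vec\varphi}\in\bbC^D$, which by unitary invariance of the distribution of $\Pi$ does not depend on $\ket{\vec\varphi}$. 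The scalar $\langle\vec\varphi|\Pi|\vec\varphi\rangle$ is the squared norm of the projection of a fixed unit vector onto a uniformly random $r$-dimensional subspace of $\bbC^D$, hence is $\mathrm{Beta}(r,D-r)$-distributed, and its $n$-th moment is $(r)_n/(D)_n=\binom{r+n-1}{n}/\binom{D+n-1}{n}$. Multiplying the three pieces reproduces the bound in \eq{orig-product-bound}.

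The only real obstacle is guessing the auxiliary projector $\tilde P$ together with its twirl form from \propref{twirl}; once those are in hand, the Markov step and the Beta-moment calculation are essentially automatic. A minor bookkeeping point is that $\tilde P$ is most natural on the ``grouped'' ordering $(\bbC^{d_1})^{\otimes n}\otimes\cdots\otimes(\bbC^{d_k})^{\otimes n}$ while $\Pi^{\otimes n}$ lives on the ``interleaved'' ordering $(\bbC^{d_1}\otimes\cdots\otimes\bbC^{d_k})^{\otimes n}$, but since the two are related by a unitary permutation of tensor factors that commutes with everything in sight, this has no effect on any trace.
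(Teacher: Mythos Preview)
Your proof is correct and follows essentially the same two-step strategy as the paper: lower-bound $\bbE_\Pi\,\mu_k^n(\Pi)$ (equivalently $\bbE_\Pi\tr[\tilde P\,\Pi^{\ot n}]$) in terms of $\Pr[\nu(\Pi)\geq\gamma]$, and evaluate it exactly via \propref{twirl}. The one genuine difference is in how you establish the inequality $\nu(\Pi)^n\leq \tr[\tilde P\,\Pi^{\ot n}]$, which is the content of the paper's \lemref{prod-moment}. The paper proves this by an inductive peeling argument, reducing $\mu_k^n$ to $\mu_{k-1}^n$ one factor at a time; you instead observe in one line that $(\proj{\vec\varphi})^{\ot n}\leq \tilde P$ as operators and pair this with $\Pi^{\ot n}\geq 0$. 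Your route is shorter and more transparent, and it makes explicit the auxiliary projector $\tilde P$ that is only implicit in the paper's formulation through $\mu_k^n$. The Beta-moment remark is a correct alternative to the paper's symmetric-subspace computation in \eq{moment-eq}, and the tensor-reordering bookkeeping you flag is handled correctly.
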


Before presenting the proof, we examine three corollaries of
\thmref{general} corresponding to different special cases.

\subsubsection{Large subspaces}
One limit is the case of subspaces with small codimension, where the
minimal entanglement is small or zero.
\begin{cor}\label{cor:no-prod-states}
Let $D=\prod_{i=1}^k d_i$ and let $V$ be a uniformly random
projector in $\bigotimes_{i=1}^k \bbC^{d_i}$ of rank $r$ such that $D
> r + \sum_{i=1}^k (d_i - 1)$. Then the probability that $V$ contains
a product state is zero.  Equivalently, if we take $\Pi$ to be the
orthogonal projector onto $V$, then  $\nu(\Pi)<1$ with probability 1.
\end{cor}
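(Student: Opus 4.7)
The plan is to apply \thmref{general} directly with $\gamma = 1$ and let $n$ go to infinity. First I would observe that for any unit vector $\ket{\varphi_1\ot\cdots\ot\varphi_k}$ and any orthogonal projector $\Pi$, we have $\tr(\varphi_1\ot\cdots\ot\varphi_k)\Pi \in [0,1]$, with value $1$ iff $\ket{\varphi_1\ot\cdots\ot\varphi_k}\in \Img\Pi$. Hence $\nu(\Pi)\leq 1$ always, and the event that $V$ contains a product state is exactly the event $\nu(\Pi)\geq 1$. So it suffices to prove $\Pr_\Pi[\nu(\Pi)\geq 1] = 0$.

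Next I would substitute $\gamma=1$ into \eq{orig-product-bound} to get
\be \Pr_\Pi[\nu(\Pi)\geq 1] \leq \inf_n \frac{\binom{r+n-1}{n}\prod_{i=1}^k \binom{d_i+n-1}{n}}{\binom{D+n-1}{n}}. \ee
The key step is then an asymptotic analysis as $n\to\infty$ with $r, d_1,\ldots,d_k, D$ fixed. Using the standard estimate $\binom{m+n-1}{n} = \frac{n^{m-1}}{(m-1)!}(1+o(1))$ for fixed $m$, the numerator grows like $n^{(r-1)+\sum_{i=1}^k(d_i-1)}$ times a constant, while the denominator grows like $n^{D-1}$ times a constant. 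So the whole ratio behaves like
\be C \cdot n^{r + \sum_{i=1}^k(d_i-1) - D} \ee
for some constant $C > 0$ depending on $r, d_1,\ldots,d_k$. The hypothesis $D > r + \sum_{i=1}^k(d_i-1)$ means that the exponent of $n$ is strictly negative, so the ratio tends to $0$ as $n\to\infty$, forcing the infimum to equal $0$.

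Putting these together gives $\Pr_\Pi[\nu(\Pi)\geq 1]=0$, which is exactly the claim. I do not expect any real obstacle here — the only mildly delicate point is checking the asymptotics of the binomial coefficients, and the main conceptual content is simply that the dimensional condition $D > r+\sum_i(d_i-1)$ is precisely what makes the exponent of $n$ in the bound negative.
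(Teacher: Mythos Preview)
Your proposal is correct and follows essentially the same argument as the paper: set $\gamma=1$ in \thmref{general}, use the asymptotic $\binom{m+n-1}{n}\sim n^{m-1}/(m-1)!$ for fixed $m$, and observe that the resulting exponent $r+\sum_i(d_i-1)-D$ is negative precisely under the stated hypothesis. The only addition is your explicit remark that $\nu(\Pi)\geq 1$ is equivalent to $V$ containing a product state, which the paper leaves implicit in the corollary statement.
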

This can be proven by standard algebraic-geometric
arguments~\cite{Eisenbud88,IlicL99}; a more explicit argument for
this fact was given recently by Walgate and Scott~\cite{WalgateS07}.
These works also proved the optimality of \corref{no-prod-states},
meaning that if $D \leq r + \sum_{i=1}^k 
(d_i - 1)$ then any subspace of dimension $r$ must contain at least
one product state.  

\begin{proof}  Set $\gamma=1$.  Then the RHS of
\eq{orig-product-bound} is 
\be \Pr_\Pi \L[\nu(\Pi)\geq \gamma\R] \leq \inf_n 
\frac{\binom{r + n - 1}{n}
\prod_{i=1}^k \binom{d_i + n - 1}{n}}
{ \binom{D + n-1}{n}}
\label{eq:delta-bound}\ee
Note that when $d$ is fixed and  $n$ is large $\binom{d+n-1}{n}=
O(n^{d-1})$.  Thus as $n\ra\infty$ \eq{delta-bound} is
$$O(n^{\sum_{i=1}^k(d_i-1)+r-1-D+1})$$
which tends to zero if $D > r + \sum_{i=1}^k (d_i - 1)$.
\end{proof}

One difference between our proof and those based on algebraic geometry
is that ours degrades smoothly when we take $\gamma$ to be slightly
smaller than one.  Indeed, we can prove a nonzero, but weak, lower
bound on the minimum entanglement of vector spaces meeting
the conditions of \corref{no-prod-states}.  For simplicity, we
consider the case of $k=2$ and $d_1=d_2=d$, although the general case
poses no additional difficulties.

\begin{prop}
Let $\Pi$ be the projectors onto a random subspace of $\bbC^d \ot \bbC^d$ of dimension
$r=d^2 - 2(d-1) - x$ for some positive integer $x$.  Then
\be \Pr_\Pi [\nu(\Pi) \geq 1-d^{-2-2d/x}] \leq d^{-d}
\ee
\end{prop}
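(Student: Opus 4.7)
The plan is to invoke \thmref{general} with $k=2$, $d_1=d_2=d$ (so $D=d^2$), $r = d^2 - 2(d-1) - x$, and $\gamma = 1 - d^{-2-2d/x}$, and then optimize over $n$ in the resulting infimum. Explicitly, the theorem gives
\[
\Pr_\Pi[\nu(\Pi) \geq \gamma] \leq \inf_n \frac{\binom{r+n-1}{n}\binom{d+n-1}{n}^2}{\gamma^n \binom{d^2+n-1}{n}}.
\]

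First I would rewrite $\binom{r+n-1}{n}/\binom{d^2+n-1}{n}$ as the product $\prod_{i=0}^{n-1}(r+i)/(d^2+i)$. Since $D-r = 2(d-1)+x$, each factor equals $1 - \frac{2(d-1)+x}{d^2+i}$, and the pair of elementary estimates $1-y \leq e^{-y}$ and $\sum_{i=0}^{n-1}(d^2+i)^{-1} \geq \ln(1+n/d^2)$ collapses this product to $(d^2/(d^2+n))^{2(d-1)+x}$. Next, I would bound $\binom{d+n-1}{n} \leq (d+n-1)^{d-1}/(d-1)!$ for the symmetric-subspace factors, and, writing $\epsilon := 1-\gamma = d^{-2-2d/x}$, estimate $\gamma^{-n} \leq \exp(n\epsilon/(1-\epsilon))$ from $-\ln(1-\epsilon) \leq \epsilon/(1-\epsilon)$.

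Multiplying these estimates together and using the harmless inequality $(d+n-1)/(d^2+n) \leq 1$ to cancel one copy of each in the denominator, the probability bound takes the approximate form
\[
\frac{d^{4d-4+2x}}{((d-1)!)^2\,n^x}\exp\!\L(\frac{n\epsilon}{1-\epsilon}\R).
\]
Calculus in $n$ locates the optimum near $n \asymp x/\epsilon = x\,d^{2+2d/x}$; at this $n$ the exponential contributes only an $O(1)$ factor, the denominator $n^x$ supplies $x^x d^{2d+2x}$, and the remaining powers of $d$ collapse to something of order $d^{2d-4}(e/x)^x/((d-1)!)^2$. A Stirling estimate for $(d-1)!$ should then reduce this to the claimed $d^{-d}$.

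The main obstacle I expect is this last bookkeeping step. The shrinking factor $(d^2/(d^2+n))^{2(d-1)+x}$ and the growing $\binom{d+n-1}{n}^2$ almost exactly cancel, so the $d^{-d}$ decay must be extracted entirely from the residual $n^{-x}$ coming from the $x$ "excess" dimensions of codimension. Squeezing the $e^{2d}$-type Stirling slack down to match $d^{-d}$ precisely — rather than a bound that is polynomially weaker — will require tight versions of the inequalities above and a careful choice of the prefactor on the optimal $n$ (a small constant times $xd^{2+2d/x}$).
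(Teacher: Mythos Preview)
Your approach is essentially the same as the paper's: both invoke \thmref{general} with $k=2$, $d_1=d_2=d$, bound the binomial coefficients by polynomials in $n$ (the paper uses the two-sided estimate $n^{a-1}/(a-1)!\leq \binom{n+a-1}{n}\leq n^{a-1}e^{a^2/2n}/(a-1)!$ uniformly for all four binomials rather than treating the ratio $\binom{r+n-1}{n}/\binom{d^2+n-1}{n}$ separately), and take $n$ of order $d^{2+2d/x}$ --- the paper simply fixes $n=d^{2+2d/x}$ at the outset rather than optimizing. Your worry about the final bookkeeping is well placed: the paper's own proof is quite terse at exactly this point, arriving at $d^{2(d-1)+2x}/(\gamma^n n^x)$ and asserting that substituting $\gamma=1-1/n$ ``yields the desired bound'' without spelling out the constants.
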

\begin{proof}
Set $n=d^{2+2d/x}$ so that $\gamma = 1-1/n$.  Observe that 
\be \frac{n^{a-1}}{a-1!}\leq 
\binom{n+a-1}{n} \leq \frac{n^{a-1}}{a-1!}e^{a^2/2n}.\ee
Applying this to \eq{orig-product-bound} with $\gamma=$, we obtain
\bas \Pr_\Pi[\nu(\Pi)\geq \gamma] &\leq
\frac{d^2-1!}{d-1!^2 (d^2-2(d-1)-x)!} \frac{1}{\gamma^nn^x} e^{d^4/n}
\\ &\leq \frac{d^{2(d-1)+2x}}{\gamma^nn^x}
\eas
Substituting $\gamma = 1-1/n$ yields the desired bound.
\end{proof}

\subsubsection{Entanglement of random pure bipartite states}
Second, consider the case when $r=1$, and so $\Pi=\proj\psi$.  When
$k=2$, $\nu(\psi)$ is simply the largest Schmidt value of a random
state, and should be roughly $(1/\sqrt{d_1}+1/\sqrt{d_2})^2$, according
to the Mar\v{c}enko-Pastur law~\cite{GT04}.  Unfortunately, \thmref{general} is
not quite strong enough to prove this, and can only obtain a bound of
$1/d_1+\frac{O(1+\log(d_2/d_1))}{\sqrt{d_1d_2}}$ when $d_1 \gg d_2$.
To illustrate the technique, we consider the case of $d_1=d_2=d$, when
the true value of $\nu(\Pi)$ is $\approx 4/d$, and we achieve a result
that is weaker by a constant factor.
\begin{cor}\label{cor:entangled}
 Let $\ket{\psi}$ be drawn uniformly at random from
$\bbC^{d} \ot \bbC^{d}$,
Let $\gamma_0 = \frac{16}{ed}$.  Then
\be
\Pr_\psi \left\{\|\psi^A\|_\infty \geq \gamma_0 e^\eps\right\}
 \leq e^{-d\eps}
\label{eq:pure-state-bound}\ee
\end{cor}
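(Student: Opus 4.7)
The plan is to invoke \thmref{general} with $k=2$, $d_1=d_2=d$, $r=1$, and threshold $\gamma=\gamma_0 e^\eps$. For $\Pi=\proj\psi$, maximizing $|\langle\varphi_1\otimes\varphi_2|\psi\rangle|^2$ over $\ket{\varphi_2}$ first produces $\langle\varphi_1|\psi^A|\varphi_1\rangle$, and then the outer max over $\ket{\varphi_1}$ gives $\|\psi^A\|_\infty$, so $\nu(\Pi)=\|\psi^A\|_\infty$. Since $\binom{1+n-1}{n}=1$, \thmref{general} reduces to
\be \Pr_\psi\!\L[\|\psi^A\|_\infty\geq \gamma\R] \leq \inf_n \frac{\binom{d+n-1}{n}^2}{\gamma^n \binom{d^2+n-1}{n}}. \ee

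Rather than try to solve the infimum exactly, I would plug in the symmetric choice $n=d$, which is approximately optimal and places every binomial in the regime where Stirling gives a clean answer. For the numerator, the central-binomial inequality $\binom{2d}{d}\leq 4^d/\sqrt{\pi d}$ together with the identity $\binom{2d-1}{d}=\frac{1}{2}\binom{2d}{d}$ yields $\binom{2d-1}{d}^2 \leq 16^d/(4\pi d)$. For the denominator, I would write $\binom{d^2+d-1}{d}=d^2(d^2+1)\cdots(d^2+d-1)/d!$, lower-bound the numerator trivially by $d^{2d}$ (every factor exceeds $d^2$), and apply Robbins' Stirling bound $d!\leq e\sqrt{d}\,(d/e)^d$ to obtain $\binom{d^2+d-1}{d}\geq (ed)^d/(e\sqrt{d})$. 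Substituting these estimates and cancelling $\gamma^d = (16/(ed))^d e^{d\eps}$ leaves
\be \Pr \leq \frac{e}{4\pi\sqrt{d}}\,e^{-d\eps} \leq e^{-d\eps}, \ee
where the last step uses $e<4\pi$.

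The main obstacle, as I see it, is tuning the Stirling estimates finely enough for the exponential factors to cancel cleanly: the constant $16$ in $\gamma_0$ comes from squaring $\binom{2d}{d}\sim 4^d$, and the $1/e$ comes from $d!\sim (d/e)^d$. A cruder denominator bound like $\binom{N}{k}\leq (eN/k)^k$ applied to $\binom{d^2+n-1}{n}$ produces a base-of-the-exponential of the form $e^3(1+\alpha)^2/(16\alpha)$ for $n=\alpha d$, whose minimum over $\alpha>0$ is $e^3/4>1$; the exponential then fails to cancel and the bound becomes vacuous. Pulling out $1/d!$ explicitly and feeding in the correct Stirling constant is what produces the tight $(ed)^d$ denominator, makes the cancellation exact, and fixes the precise value of $\gamma_0$.
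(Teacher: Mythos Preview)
Your proposal is correct and follows the same route as the paper: invoke \thmref{general} with $k=2$, $d_1=d_2=d$, $r=1$, set $n=d$, and estimate the three binomials. The paper's proof is terser, recording only $\binom{2d-1}{d}\leq 4^d$ and $\binom{d^2+d-1}{d}\geq d^{2d}/d!\geq (ed)^d$ before substituting; note however that the last inequality is actually in the wrong direction (since $d!>(d/e)^d$ for all $d\geq 1$), so your sharper numerator bound $\binom{2d-1}{d}^2\leq 16^d/(4\pi d)$ together with the honest Stirling estimate $d^{2d}/d!\geq (ed)^d/(e\sqrt{d})$ is exactly what is needed to make the stated constant $\gamma_0=16/(ed)$ work.
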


\begin{proof}
Let $n=d$. Then $\binom{d+n-1}{n} \leq 4^d$ and $\binom{d^2+n-1}{n}
\geq d^{2d}/d! \geq (ed)^d$.  We now substitute into
\eq{orig-product-bound} and obtain \eq{pure-state-bound}.
\end{proof}

\subsubsection{Multi-qubit states}
We also recover another corollary in the case of many qubits that
slightly sharpens the main technical result of \cite{GFE09}.  
\begin{cor}
Let
$d_1=\cdots=d_k=2$ and $r=1$.  Choose some $\eps>0$.  Then 
a random $k$-qubit pure state has probability 
\be \L(\frac{1}{\eps^{(1+1/\eps)}k}\R)^k \sim k^{-k} \ee
of having overlap $\geq \gamma := k^{1+2\eps}2^{-k}/e$ with any
$k$-qubit product state.
\end{cor}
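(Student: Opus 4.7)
The plan is to apply \thmref{general} directly with $d_1=\cdots=d_k=2$, $D=2^k$, and $r=1$, then choose $n$ carefully as a function of $k$ and $\eps$. With these parameters the binomial coefficients simplify dramatically: $\binom{r+n-1}{n}=1$, each $\binom{d_i+n-1}{n}=n+1$ so $\prod_i \binom{d_i+n-1}{n}=(n+1)^k$, and for the denominator I would use the elementary lower bound $\binom{2^k+n-1}{n}\geq 2^{kn}/n!$ (obtained by keeping only the leading factor in each term of $2^k(2^k+1)\cdots(2^k+n-1)$). Putting these pieces together with $\gamma=k^{1+2\eps}2^{-k}/e$, the powers of $2^{kn}$ cancel and the bound from \thmref{general} becomes
\be
\Pr_\psi[\nu(\psi)\geq \gamma]\;\leq\;\frac{(n+1)^k\, n!\, e^n}{k^{n(1+2\eps)}}.
\label{eq:cleaned-bound}
\ee

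The key step is to choose the right $n$. I would take $n=\lceil k/\eps\rceil$, which is the value that balances the $(n+1)^k$ factor against the $k^{n(1+2\eps)}$ factor in the denominator. To handle the factorial cleanly, I would apply Stirling in the form $n!\leq (n/e)^n\cdot e\sqrt{n}$, so that the factor $e^n$ coming from $\gamma^{-n}$ combines with $(n/e)^n$ to yield $n^n$, without any leftover exponential. Substituting $n\approx k/\eps$ then gives $(n+1)^k\sim (k/\eps)^k$, $n^n\sim (k/\eps)^{k/\eps}$, and $k^{n(1+2\eps)}=k^{k/\eps}\cdot k^{2k}$; the $k^{k/\eps}$ factors cancel and one is left with
\bas
\Pr_\psi[\nu(\psi)\geq \gamma]\;\lesssim\;\frac{(k/\eps)^k\,(k/\eps)^{k/\eps}}{k^{k/\eps}\,k^{2k}}
\;=\;\L(\frac{1}{\eps^{1+1/\eps}\,k}\R)^k,
\eas
up to polynomial-in-$k$ corrections absorbed in the $\sim$.

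The main obstacle is really just recognizing the correct scaling $n\sim k/\eps$; everything else is bookkeeping. A secondary nuisance is the $(n+1)^k$ versus $n^k$ distinction and the $\sqrt{n}$ from Stirling, both of which contribute only sub-exponential factors and are swallowed by the $\sim$ in the statement. Once $n=\lceil k/\eps\rceil$ is chosen, the cancellation of $k^{k/\eps}$ between the numerator ($n^n$) and denominator ($k^{n}$ inside $k^{n(1+2\eps)}$) is the mechanism that produces the unusual exponent $1+1/\eps$ on $\eps$ in the final bound, and the residual $k^{-2\eps\cdot k/\eps}=k^{-2k}$ combined with the $(k/\eps)^k$ in the numerator yields the $k^{-k}$ rate.
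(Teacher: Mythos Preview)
Your proposal is correct and follows essentially the same route as the paper's proof: plug $d_i=2$, $r=1$ into \thmref{general}, lower-bound $\binom{2^k+n-1}{n}\geq 2^{kn}/n!$ so that the $2^{kn}$ cancels against $\gamma^{-n}$, choose $n=k/\eps$, and then use Stirling on $n!$ to obtain the final form. The paper is a bit terser (it does not explicitly invoke Stirling or worry about the ceiling), but the key choice $n=k/\eps$ and the resulting cancellation $k^{k/\eps}/k^{k/\eps}$ are exactly the same.
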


\begin{proof}
Plugging $d_1=\cdots=d_k=2$ and $r=1$ into \eq{orig-product-bound}
yields an upper bound of 
\be \frac{(n+1)^k n!}{\gamma^n 2^k(2^k+1)\cdots (2^k+n-1)}
\label{eq:geom-ent}\ee
We then choose $n=k/\eps$ and can bound \eq{geom-ent} with
$$\leq \frac{(n+1)^k n!}{(k^{1+2\eps}/e)^n}
\leq \frac{ (k/\eps)^k (k/e\eps)^{k/\eps}}
{
 k^{k(2+1/\eps)}e^{k/\eps}}
= \L(\frac{1}{\eps^{(1+1/\eps)}k}\R)^k.
$$
\end{proof}

By contrast, Gross, Flammia and Eisert~\cite{GFE09} prove that the
probability of $\nu(\Pi) 
\geq 8k^22^{-k}$ is $\leq e^{-k^2}$.

\subsection{Proof of the main result}
To prove \thmref{general}, we will relate the maximum overlap of $\Pi$
with a product state to the $n^{\text{th}}$ moment of its overlap,
which we define as
\be \mu_k^n(\Pi) := \bbE_{\varphi_1,\ldots,\varphi_k} 
(\tr (\Pi \bigotimes \varphi_i))^n
\label{eq:mu-def}\ee
Naturally $\nu(\Pi) = \lim_{n\ra\infty} \mu_k^n(\Pi)^{1/n}$.  But we
will see that more precise quantitative estimates are possible.

It will be convenient to let \eq{mu-def} be defined for any positive
semidefinite operator $\Pi$.
\begin{lem}\label{lem:mu}\ 

\bit
 \item $\mu_k^n$ is {\em homogenous}.  That is, for any $x>0$ and
$\Pi\geq 0$, $\mu_k^n(x\Pi) = x^n \mu_k^n(\Pi)$.
\item $\mu_k^n$ is {\em non-decreasing}.  That is, if $0\leq A\leq B$, then
  $0\leq \mu_k^n(A) \leq \mu_k^n(B)$.
\eit
\end{lem}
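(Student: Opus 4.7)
The plan is to verify both properties directly from definition \eq{mu-def}, treating the integrand pointwise before taking the expectation.

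For homogeneity, I would use linearity of the trace to observe that for any $x>0$,
\[
\tr\!\L((x\Pi)\bigotimes_{i=1}^k\varphi_i\R) = x\,\tr\!\L(\Pi\bigotimes_{i=1}^k\varphi_i\R).
\]
Raising to the $n$-th power pulls out a factor of $x^n$, which is deterministic and so commutes with the expectation over $\ket{\varphi_1},\ldots,\ket{\varphi_k}$. This gives $\mu_k^n(x\Pi)=x^n\mu_k^n(\Pi)$ in one line.

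For monotonicity, the key observation is that $\bigotimes_{i=1}^k \proj{\varphi_i}$ is itself a positive semidefinite operator for every choice of unit vectors $\ket{\varphi_i}$. Hence $0\leq A\leq B$ implies $0\leq \tr(A\bigotimes_i\varphi_i)\leq \tr(B\bigotimes_i\varphi_i)$ pointwise in $(\varphi_1,\ldots,\varphi_k)$ (the left inequality uses $A\geq 0$, the right uses $B-A\geq 0$, both applied to the positive operator $\bigotimes_i\varphi_i$). Since $n$ is a nonnegative integer and both quantities are nonnegative reals, raising to the $n$-th power preserves the inequality. Monotonicity of the expectation then yields $0\leq \mu_k^n(A)\leq \mu_k^n(B)$.

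Neither step is really an obstacle; the whole lemma is a routine sanity check on the definition. The only subtlety worth flagging is the positivity of $\bigotimes_i\varphi_i$, which is what makes the pointwise inequality $\tr(A\bigotimes_i\varphi_i)\leq \tr(B\bigotimes_i\varphi_i)$ follow from the operator inequality $A\leq B$; this relies on the standard fact that $\tr(XY)\geq 0$ whenever $X,Y\geq 0$.
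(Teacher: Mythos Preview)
Your argument is correct: both homogeneity and monotonicity follow immediately from the pointwise behavior of $\tr(\Pi\bigotimes_i\varphi_i)$ under scaling and under the operator ordering, together with the fact that $\bigotimes_i\varphi_i\geq 0$. The paper actually omits the proof entirely (``The proof is omitted''), so there is no alternative approach to compare against; your write-up simply fills in the routine verification the author elected to skip.
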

The proof is omitted.  (In fact, the $\mu_k^n$ are norms.  See
\cite{Montanaro12} for more discussions of their properties and
relations between $\mu_k^n$ for different values of $n$.)

The strategy of our proof is to calculate $\bbE_\Pi[\mu_k^n(\Pi)]$ in
two different ways.  On the one hand, it can be evaluated exactly, as
we will discuss below.  On the other hand, for any fixed $\Pi$, this
expression can be lower-bounded in terms of $\nu(\Pi)$ as follows:

\begin{lem}\label{lem:prod-moment}
For any $\Pi$ and any $n>0$,
\be \mu_k^n(\Pi)
\geq \frac{\nu(\Pi)^n}{\prod_{i=1}^k \binom{d_i+n-1}{n}}.
\label{eq:prod-moment}\ee
\end{lem}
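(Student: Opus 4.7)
The plan is to exploit the fact that taking the $n$-th power of a trace linearises $\mu_k^n(\Pi)$ in $\Pi^{\ot n}$, after which the twirl formula from \propref{twirl} can be applied independently to each of the $k$ subsystems, and finally replace the resulting product of symmetric projectors by the rank-one projector built out of the optimal product state.

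First I would rewrite the integrand of \eq{mu-def} as
\be (\tr \Pi (\varphi_1 \ot \cdots \ot \varphi_k))^n
= \tr \tilde\Pi \,(\varphi_1^{\ot n} \ot \cdots \ot \varphi_k^{\ot n}),\ee
where $\tilde\Pi$ is $\Pi^{\ot n}$ with tensor factors permuted so that the $n$ copies of subsystem $i$ are grouped together. Because $\Pi \geq 0$, conjugation by a permutation of tensor factors gives $\tilde\Pi \geq 0$. Swapping expectation with trace and using independence of the $\varphi_i$, I then apply \propref{twirl} to each block, obtaining $\bbE_{\varphi_i}\varphi_i^{\ot n} = \psym{d_i}{n}/\binom{d_i+n-1}{n}$, and hence
\be \mu_k^n(\Pi) = \frac{1}{\prod_{i=1}^k \binom{d_i+n-1}{n}}\; \tr \tilde\Pi \bigotimes_{i=1}^k \psym{d_i}{n}.\ee

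To produce the claimed $\nu(\Pi)^n$, I would fix a product state $\ket{\varphi_1^*}\ot\cdots\ot\ket{\varphi_k^*}$ attaining the maximum in the definition of $\nu(\Pi)$. By \thmref{sym} each rank-one projector $(\varphi_i^*)^{\ot n}$ lies in $\ssym{d_i}{n}$, and therefore satisfies the operator inequality $(\varphi_i^*)^{\ot n} \leq \psym{d_i}{n}$. Tensoring these inequalities preserves positivity, and since $\tilde\Pi \geq 0$,
\be \tr \tilde\Pi \bigotimes_{i=1}^k \psym{d_i}{n} \;\geq\; \tr \tilde\Pi \bigotimes_{i=1}^k (\varphi_i^*)^{\ot n} \;=\; \nu(\Pi)^n,\ee
where the last equality simply undoes the permutation used to define $\tilde\Pi$. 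Combining the two displayed equations yields \eq{prod-moment}.

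There is no real obstacle here; the only thing requiring care is the bookkeeping of how the $n$ copies of a $k$-fold tensor product rearrange into $k$ blocks of $n$ copies, and the verification that this reshuffle preserves both positivity of $\Pi^{\ot n}$ and the value of the trace against the relevant tensor-power product states.
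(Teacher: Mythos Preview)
Your proof is correct and is in fact more direct than the paper's own argument. The paper first reduces to the rank-one case by replacing $\Pi$ with the normalized vector $\ket\psi \propto \Pi\ket{\hat\varphi}$ (using $\Pi\geq\psi$ and $\nu(\psi)=\nu(\Pi)$, via the monotonicity part of \lemref{mu}), and then proceeds by induction on $k$: it averages only over $\varphi_k$, uses the symmetry of $\ket\psi^{\ot n}$ to replace $\psym{d_k}{n}$ by $I^{\ot n}$, traces out the $k$th system, lower-bounds $\tr_k\psi \geq p_{k-1}\psi_{k-1}$, and invokes homogeneity to peel off one factor of $p_{k-1}^n/\binom{d_k+n-1}{n}$ at a time.

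Your route bypasses both the rank-one reduction and the induction. By regrouping $\Pi^{\ot n}$ into $\tilde\Pi$ you can average over all $k$ subsystems simultaneously, landing on the single operator inequality $\bigotimes_i(\varphi_i^*)^{\ot n}\leq\bigotimes_i\psym{d_i}{n}$ (valid since $0\leq A_i\leq B_i$ implies $\bigotimes A_i\leq\bigotimes B_i$ by a telescoping argument), after which positivity of $\tilde\Pi$ finishes the job. This is shorter and does not require \lemref{mu} at all. The paper's inductive structure, on the other hand, is slightly more explicit about how the bound degrades system by system, and its intermediate identity $\mu_k^n(\psi)=\mu_{k-1}^n(\tr_k\psi)/\binom{d_k+n-1}{n}$ for rank-one $\psi$ may be of independent interest.
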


\begin{proof}[Proof of \lemref{prod-moment}]
Let $\ket{\hat\varphi}=\ket{\hat\varphi_1}\ot\cdots \ot
\ket{\hat\varphi_k}$ be a product state maximizing $\tr 
\Pi\hat\varphi$; i.e. such that $\tr\Pi\hat\varphi=\nu(\Pi)$.  Define $p :=
\tr\Pi\hat\varphi$ and $\ket{\psi} := p^{-1/2} \Pi\hat\varphi$.
Note that $\Pi\geq \psi$ and that $\nu(\psi)=\nu(\Pi)$.  Thus, it
suffices to prove the lemma in the case when $\Pi = \psi$.

Let $\ket{\psi_k}=\ket\psi$.  We now iteratively define
$p_{k-1},\ldots,p_1$ and 
$\ket{\psi_{k-1}},\ldots,\ket{\psi_1}$ as follows.  For
$j=k-1,\ldots,1$, choose $p_j>0$ and $\ket{\psi_j}\in S^{d_1\cdots
  d_j}$ to satisfy
\be
I_{d_1} \ot \cdots \ot I_{d_{j-1}} \ot \bra{\hat\varphi_j} \cdot
\ket{\psi_j} = \sqrt{p_j} \ket{\psi_j}.\ee
Observe that $p=p_1\cdots p_{k-1}$.

We will show that 
\be \mu_k^n(\psi) \geq \frac{p_{k-1}^n}{\binom{d_k+n-1}{n}}
\mu_{k-1}^n(\psi_{k-1})
\label{eq:prod-moment-recur}.\ee
This can then be applied inductively to establish the lemma.

Now we calculate 
\ba  
\mu_k^n(\Pi) & = 
\bbE_{\varphi_1, \ldots, \varphi_k} 
\tr \psi^{\ot n} \bigotimes_{i=1}^k \varphi_i^{\ot n}
\\& = 
\bbE_{\varphi_1, \ldots, \varphi_{k-1}} \sum_{\pi\in\cS_n}
\tr \psi^{\ot n} \L( 
\bigotimes_{i=1}^{k-1} \varphi_i^{\ot n} \ot
\frac{ P_d(\pi)}{d_k^{\bar n}} \R)
& \text{by \propref{twirl}} \\
& = 
\bbE_{\varphi_1, \ldots, \varphi_{k-1}} \sum_{\pi\in\cS_n}
\tr \psi^{\ot n} 
P_d(\pi)^{\ot k} P_d(\pi^{-1})^{\ot k}
\L( \bigotimes_{i=1}^{k-1} \varphi_i^{\ot n} \ot
\frac{ P_d(\pi)}{d_k^{\bar n}} \R)\hspace{-2cm}
\\& = 
\bbE_{\varphi_1, \ldots, \varphi_{k-1}} \sum_{\pi\in\cS_n}
\tr \psi^{\ot n} 
\L( \bigotimes_{i=1}^{k-1} \varphi_i^{\ot n} \ot
\frac{I_d^{\ot n}}{d_k^{\bar n}} \R)
& \text{since $\bra{\psi}^{\ot n}$ and
  $\ket{\varphi_i}^{\ot n}$ are symmetric}
\\ &=
\frac{1}{\binom{d_k+n-1}{n}}\bbE_{\varphi_1, \ldots, \varphi_{k-1}}
\tr (\tr_k\psi)^{\ot n}  \bigotimes_{i=1}^{k-1} \varphi_i^{\ot n} 
\\ & = \frac{1}{\binom{d_k+n-1}{n}} \mu_{k-1}^n(\tr_k\psi)
\\ & \geq \frac{1}{\binom{d_k+n-1}{n}} \mu_{k-1}^n(p_{k-1}\psi_{k-1})
& \text{since }\tr_k\psi \geq p_{k-1}\psi_{k-1}
\\ & = \frac{p_{k-1}^n}{\binom{d_k+n-1}{n}} \mu_{k-1}^n(\psi_{k-1})
& \text{by homogeneity (\lemref{mu})}
\ea
This concludes the proof of the Lemma.
\end{proof}

Remark: \lemref{prod-moment} has the following alternate
interpretation (which we will make use of).
\begin{multline}
\max\{|\braket{\psi}{\hat\varphi_1,\ldots,\hat\varphi_k}|^{2n}  :
\ket{\hat\varphi_1}\in S^{d_1},\ldots,\ket{\hat\varphi_k}\in S^{d_k}\}
\cdot
\bbE_{\ket{\varphi_1}\in S^{d_1},\ldots,\ket{\varphi_k}\in S^{d_k}}
[|\braket{\hat\varphi_1,\ldots,\hat\varphi_k}
{\hat\varphi_1,\ldots,\hat\varphi_k}|^{2n}]
\\ \leq
\bbE_{\ket{\varphi_1}\in S^{d_1},\ldots,\ket{\varphi_k}\in S^{d_k}}
[|\braket{\psi}{\hat\varphi_1,\ldots,\hat\varphi_k}|^{2n}].
\end{multline}

\begin{multline}
\max\{|\braket{\psi}{\hat\varphi_1,\ldots,\hat\varphi_k}|^{2n}  :
\ket{\hat\varphi_1}\in S^{d_1},\ldots,\ket{\hat\varphi_k}\in S^{d_k}\}
\\  \leq
\frac{\bbE_{\ket{\varphi_1}\in S^{d_1},\ldots,\ket{\varphi_k}\in S^{d_k}}
[|\braket{\psi}{\hat\varphi_1,\ldots,\hat\varphi_k}|^{2n}] \hfill}
{\bbE_{\ket{\varphi_1}\in S^{d_1},\ldots,\ket{\varphi_k}\in S^{d_k}}
[|\braket{\varphi_1,\ldots,\varphi_k}
{\hat\varphi_1,\ldots,\hat\varphi_k}|^{2n}]}
\end{multline}

\begin{proof}[Proof of \thmref{general}]
Let $S_\gamma := \{\Pi : \nu(\Pi) \geq \gamma\}$ and let $p := \Pr_\Pi\{\Pi \in S_\gamma\}$.  Our goal is to upper
bound $p$.  We will do this by computing
\be \bbE_\Pi \mu_k^n(\Pi) \label{eq:nth-moment}\ee
in two different ways.

Since $\tr \Pi\varphi$ is always $\geq 0$, we can lower bound the
expectation over all $\Pi$ by considering the contribution only from
$\Pi\in S_\gamma$.  By \lemref{prod-moment} this gives us the lower
bound
\be \bbE_\Pi \mu_k^n(\Pi)
\geq p\frac{\nu(\Pi)^n}{\prod_{i=1}^k \binom{d_i+n-1}{n}}.
\label{eq:moment-bound}\ee

On the other hand, we can also calculate \eq{nth-moment} exactly.
Indeed, $\bbE_\Pi \mu_k^n(\Pi) = \bbE_{\Pi,\varphi} (\tr
\Pi\varphi)^n$ and it turns out that this expectation is independent
of $\varphi$. To see this, let $\Pi=U^\dag\Pi_0 U$ for $\Pi_0$ a fixed
rank-$r$ projector and $U$ drawn uniformly randomly from
$U(D)$. 
\begin{align} \bbE_{\Pi} (\tr \Pi \varphi)^{\ot n}
& = \bbE_{U} (\tr U\varphi U^\dag \Pi_0)^n\non\\
& = \bbE_{U}
\tr (U\varphi U^\dag)^{\ot n} \Pi_0^{\ot n}\non\\
& =  \tr
\frac{P_{\text{sym}}^{D,n}}{\tr P_{\text{sym}}^{D,n}}
 \Pi_0^{\ot n}\non\\
& = \frac{\tr P_{\text{sym}}^{r,n}}{\tr P_{\text{sym}}^{D,n}}
 = \frac{\binom{r+n-1}{r-1}}{\binom{D+n-1}{D-1}}
\label{eq:moment-eq}
\end{align}
Since \eq{moment-eq} holds for all $\varphi$, it also equals the
expectation and in turn equals $\bbE_\Pi \mu_k^n(\Pi)$.  Finally, we
combine \eq{moment-bound} and \eq{moment-eq} to obtain the desired
bound on $p$.
\end{proof}

\subsection{Discussion}
This approach has its strengths, but is also more limited in scope
than techniques based on Levy's Lemma.  For example, replacing the maximum
overlap with product states with some other measure of entanglement
would require more effort.  Even showing the concentration of the {\em
  smallest} Schmidt value of all pure states in a random substate
appears to require some additional ideas, although this is not
completely hopeless.

We remark that these techniques have some significant overlap with the
classic {\em method of moments} from random matrix theory (see
\cite{Tao12, AndersonGZ09} for reviews, or \cite{AHH09} for a quantum example).

\section*{Acknowledgments}
Thanks to Guilio Chiribella, Graeme Mitchison, Michael Walter and
Kevin Zatloukal for helpful comments and discussion. 
I was funded by NSF grants CCF-0916400 and CCF-1111382 and ARO contract
W911NF-12-1-0486.

\bibliographystyle{../latex/hyperabbrv}
\bibliography{../latex/hbib}
\end{document}